\newcommand{\ccF}{{\mathscr F}}
\newcommand{\Ind}{{\mathds 1}}
\newcommand{\ind}[1]{\Ind_{\{#1\}}}
\newcommand{\restr}{\mathbf{\kern0.3ex%
 \vert\kern-0.3ex}\backprime\kern0.3ex}
\newtheorem{theorem}{Theorem}[section]
\newtheorem{lemma}[theorem]{Lemma}              %
\theoremstyle{definition}
\newtheorem{assumption}[theorem]{Assumption}%
 \newcommand{\Kom}[1]{}
\newcommand{\cadlag}{c\`{a}dl\`{a}g}
\newcommand{\R}{\mathbb{R}}
\newcommand{\beq}{\begin{equation}}
\newcommand{\eeq}{\end{equation}}
\newcommand{\bbF}{\mathbb{F}}
\newcommand{\Q}{\mathbb{Q}}
\newcommand{\dbra}[1]{[\kern-0.15em[ #1 ]\kern-0.15em]} 
\begin{document}

\renewcommand{\theenumi}{\roman{enumi}}

\title{Defaultable term structures driven by semimartingales}
	    \author[Gümbel]{Sandrine Gümbel}
		\author[Schmidt]{Thorsten Schmidt}
		\address{Albert-Ludwigs University of Freiburg, Ernst-Zermelo-Str. 1, 79104 Freiburg, Germany.}
		\email{sandrine.guembel@stochastik.uni-freiburg.de}
		\email{thorsten.schmidt@stochastik.uni-freiburg.de}
    
    \date{\today. }
    \thanks{ Financial support from the German Research Foundation (DFG) within project No.~SCHM 2160/9-1 is gratefully acknowledged.}

\begin{abstract} We consider a market with a term structure of credit risky bonds in the single-name case. We aim at minimal assumptions extending existing results in this direction:
first, the  random field of forward rates is driven by a general semimartingale. Second, the Heath-Jarrow-Morton approach is extended with an additional component capturing those future jumps in the term structure which are visible from the current time. Third, the associated recovery scheme is as general as possible, it is only assumed to be non-increasing. 
In this general setting we derive generalized drift conditions which characterize when a given measure is a local martingale measure, thus yielding no asymptotic free lunch with vanishing risk (NAFLVR), the right notion for this large financial market to be free of arbitrage.  
 \\%

\textbf{Keywords:} credit risk, arbitrage, HJM, forward rate, default compensator, large financial market, recovery, term structure model, stochastic discontinuities.
\end{abstract}

% 60G99, 91B70

\maketitle

\section{Introduction}\label{cha:creditRisk}
The risk that a counterparty of a financial contract is not able to fulfil its obligations, in other words if it {\em defaults}, is in banking  known as credit risk. The default of a corporate or sovereign entity might be due to a variety of reasons, e.g. bankruptcy or failure to pay due to deteriorating business  conditions. In this work we aim at a general framework for a single-name credit market subject to default risk. In particular, we  derive conditions to ensure absence of arbitrage in an extended \cite{HJM} (HJM) model allowing for a non-absolutely continuous term structure. %

The two main modeling approaches in credit risk are  {\em structural models} also known as firm-value models and {\em reduced-form models}, often also called {\em intensity based approaches}.
The structural approach was introduced by \cite{merton1974pricing}. Credit events appear in correspondence to the firm's value relative to some default triggering barrier. The firm's value is described by a stochastic process. One advantage of the structural approach consists in the economically intuitive picture of the direct connection of default events and the firm's capital structure. One drawback is that the determination of the firm's capital structure constitutes a challenging task. 

The reduced-form approach was introduced by \cite{jarrow1995pricing} and \cite{artzner1995default}.
In this model class the time of default is modeled by an exogenous random variable. The default cannot be predicted and may occur at any time. Typically, one assumes the existence of a default intensity. 
Extended versions of reduced-form approaches drop the assumption of the existence  of a default intensity. This allows to incorporate default at fixed and predictable times with positive probability. In that sense they unify the reduced-form and structural approach.

The literature on defaultable term structure modelling is immense and we refer to \cite{Ammann2001}, \cite{Schonbucher}, \cite{Bielecki2004}, \cite{SchmidtStute2004}, \cite{duffie2012credit}  and \cite{doumpos2019analytical} for an overview. %
Following the seminal paper  \cite{merton1974pricing}, early structural models are treated in \cite{black1973pricing}, \cite{black1976valuing}, \cite{geske1977valuation} and \cite{geske1984valuation}. While the driving processes to model the evolution of the firm's value in structural models are commonly diffusion processes, \cite{zhou1997jump} proposes a jump-diffusion process as driving process and values default-risky securities. 
\cite{duffie2001term} and \cite{FreySchmidt2009} present  structural models that are consistent with a reduced-form approach by considering incomplete information. \cite{cetin2004modeling} obtain a reduced-form approach from their structural credit risk model by considering a reduced manager information set.
In contrast to these structural approaches, the reduced-form setting starts with  \cite{jarrow1995pricing} and \cite{artzner1995default}, followed for example by  \cite{duffie1996recursive}, \cite{jarrow1997markov}, \cite{lando1998cox}, \cite{madan1998pricing} and \cite{elliott2000models}, amongst many other works. Jumps are incorporated in frameworks such as
\cite{Bjoerk1997}, 
 \cite{eberlein2003defaultable}, \cite{ozkan2005credit}, \cite{eberlein2005levy}, \cite{filipovic2008existence},   \cite{eberlein2013rating} or \cite{Cuchiero2016}. %

The approach which we study here is general enough to encompass both approaches. It appears that 
\cite{belanger2004general} is one of the earliest works where this is suggested. Moreover, in discrete time, discontinuities arise naturally in the term structure which was already discussed in \cite{filipovic2002markovian}. A more general approach was proposed in \cite{GehmlichSchmidt2016MF}. Stochastic discontinuities were covered with random measures for the first time in \cite{fontana2018general} and here we extend these approaches further by allowing for semimartingale processes as drivers for the random fields of forward rates. This significantly increases the amount of technicality of the framework, but reveals the presence of additional terms appearing in the conditions that ensure no-arbitrage. For discontinuities in a setting with two filtrations we refer to \cite{jiao2015generalized} and \cite{jiao2018modeling}.

The paper is structured as follows: In Section \ref{sec:cRdefaultablemodeling} we introduce the basic framework for our setting. We continue by studying the simpler setting with vanishing recovery in Section \ref{MainResults}. The main theorem is contained in Section \ref{sec:cRrecovery}, where we allow for a general recovery process. Section \ref{sec:conclusion} concludes.

\section{General defaultable term structure modeling}\label{sec:cRdefaultablemodeling}
In this section we provide a general modeling framework for credit risky bonds in an extended reduced-form HJM-framework featuring a non-absolutely continuous term structure of credit risky bonds. Our main goal is to propose a general semimartingale framework. Since semimartingales allow for  stochastic discontinuities, such an approach necessarily requires an extension of the classical HJM-framework.

We consider a credit risky financial market which contains credit risky bonds for all available maturities. Such a market containing uncountably many traded products  is a large financial market and can be tackled with techniques from this strand of literature. This guides us to the economic notion  of no-arbitrage for large financial markets, no asymptotic free lunch with vanishing risk (NAFLVR). We use the techniques for term-structure markets  developed  in  \cite{KleinSchmidtTeichmann2015} and \cite{CuchieroKleinTeichmann}.
A direct application of this general setting (compare \cite{fontana2020term}) yields as a sufficient condition of NAFLVR the existence of an equivalent local martingale measure (ELMM).  In the following, we therefore derive necessary and sufficient conditions for a reference probability measure to be a local martingale measure. 

\subsection{The market of credit risky bonds}
We consider an infinite time horizon, while models with a finite time horizon $\mathbb{T}<+\infty$ can be encompassed by stopping the relevant processes at $\mathbb{T}$.
We assume that the stochastic basis $(\Omega,\ccF,\bbF,\Q)$ is rich enough for the following processes to exist. 
We use the convention $(t,t]=\emptyset$, for all $t \in \R_+$. 
We refer to \cite{JacodShiryaev} for  the essential tools of  stochastic analysis which we require here.

Discounting is done with respect to the bank account 
\begin{equation}\label{eq:numeraire}
	B =\exp\left(\int_0^\cdot r_s ds\right),
\end{equation}
with a progressive process $r = (r_t)_{t\geq 0}$ (the short rate) satisfying $\int_0^T|r_s|ds < + \infty$ a.s. for all $T>0$. For practical applications one typically chooses the OIS rate to construct such a num\'{e}raire. We remark that $B$ is a strictly positive, adapted and absolutely continuous process. This assumption could  be relaxed, i.e.~$B$ replaced by a general, positive semimartingale at the cost of more involved formulas. 

For $0 \leq t \leq T$ we denote by $P(t,T)$ the price at date $t$ of a credit risky bond with maturity $T\in \R_+$. 
We assume that the credit risky bond price admits the following generalized HJM-representation
\begin{equation}\label{termstructure}
P(t,T) = \xi_t \exp\left(-\int _t^T f(t,u)du - \int_{(t,T]} g(t,u) \mu_t(du) \right), 
\end{equation}
where $(f(t,T))_{0 \leq t \leq T}$ and $(g(t,T))_{0 \leq t \leq T}$ are so-called \emph{forward rate processes}. The integral with respect to $\mu_t(du)$ encodes all the information received up to date $t$ about possible future risky dates where default might occur with positive probability. 
The process $\xi$ allows for incorporating general recovery schemes.

The  HJM formulation, see \cite{HJM}, assumes absolute continuity and is covered as special case with $g \equiv 0$. Here, we  extended it in two ways: on the one hand by assuming the forward rates to be random fields driven by general semimartingales. And on the other hand by introducing  the  measure-valued process $(\mu_t(du))_{t\geq 0}$ encompassing possible singular and jumps part, following \cite{fontana2018general}.

The measure-valued process $\mu(du) = (\mu_t(du))_{t\geq 0}$ is derived from an optional non-negative random measure on $\Omega \times \mathbb{R}_+ \times E$, with $(E, \mathcal{B}_E)$ denoting a Polish space with its Borel sigma-field in the following way: 
$$
	\mu_t(du) :=\mu([0,t]\times du).
$$
Furthermore, we make  the following assumptions.
\begin{assumption}\label{ramdommeasure}
	The random measure $\mu(\omega; ds,du)$ is a non-negative optional random measure on $\Omega \times\R_+\times E$ satisfying
	\begin{enumerate}[(i)]
		\item $\mu(\omega;ds,du) = \Ind_{\{ s < u\}} \mu(\omega;ds,du)$, for all $(s,u) \in [0,T]\times [0,T], T \in \mathbb{R}_+$ and $\omega \in \Omega;$
		\item there exists a sequence $(\sigma_n)_{n \in \mathbb{N}}$ of stopping times increasing almost surely to infinity such that $\mathbb{E}^\Q[\mu_{\sigma_n}([0,T])] < \infty$ for every $n \in \mathbb{N}$ and $T \in \mathbb{R}_+$.
	\end{enumerate}
\end{assumption}
 The first assumption corresponds to the situation that new information arriving at time $s$ only affects the likelihood of default of the future and not of the past. The second assumption is an integrability condition to ensure that the measure $\mu(ds,du)$ is predictably $\sigma$-finite %
 and that the random variable $\mu_t([0,T])$ is almost surely finite for all $0 \leq t\leq T$ and $T\in \R_+$.

Furthermore, we define for every $T \in \mathbb{R}_+$ the process $\bar{\mu}^{(T)} = (\bar{\mu}_t^{(T)})_{0 \leq t\leq T}$ by
\begin{equation}\label{eq:defmubar}
\bar{\mu}_t^{(T)} := \mu([0,T]\times [0,t]), \quad \text{for all } t \in [0,T],
\end{equation}
measuring the effect of risky dates in the period $[0,t]$ on the basis of all available information over the time interval $[0,T]$. Assumption \ref{ramdommeasure} ensures that the process $\bar{\mu}^{(T)}$ is predictable and increasing and admits a decomposition in an absolutely continuous part, a singular continuous part and a jump part.

\subsection{The forward rate processes}
For every $T\in \R_+$, the \emph{generalized forward rate processes} appearing in \eqref{termstructure} are assumed to be general semimartingale random fields of the form
\begin{align}
f(t,T) & = f(0,T) + \int_0^t  a(s,T) dA_s + \int_0^t b(s,T) dX_s, \label{tsf}\\
g(t,T) & = g(0,T) + \int_0^t  \alpha(s,T) dA_s + \int_0^t \beta(s,T) dX_s, \label{tsg}
\end{align}
for all $0 \leq t \leq T$, where $X$ is a $\R^d$-valued semimartingale and $A$ a $\R$-valued process of finite variation. Without loss of generality we may assume that $A$ is increasing.

We impose the following assumptions in order to ensure that all integrals in \eqref{tsf} -- \eqref{tsg} are well-defined and in order to apply suitable versions of (stochastic) Fubini theorems in the sequel.

\begin{assumption}\label{AssumptionFubiniandEx}
	The following conditions hold a.\,s.:
	\begin{enumerate}[(i)]
		\item the {\em initial forward curves} $T \mapsto f(0,T)$ and $T \mapsto g(0,T)$ are real-valued, $\mathcal{F}_0 \otimes \mathcal{B}(\R_+)$-measurable, and satisfy $\int_0^T |f(0,u)| du < + \infty$ and $\int_0^T |g(0,u)| \mu_t(du) < + \infty$ for all $t\in [0,T]$ and $T\in\R_+$;
		\item the \emph{drift processes} $a(\cdot\, ; \,\cdot,\cdot): \Omega \times \R_+ \times \R_+ \rightarrow \R$  and $\alpha(\cdot\, ;\, \cdot, \cdot): \Omega \times \R_+ \times \R_+ \rightarrow \R$ are real-valued and $\mathcal{O}\otimes \mathcal{B}(\mathbb{R}_+)$-measurable, for every $t \in \R_+$, where $\mathcal{O}$ denotes the optional $\sigma$-field. They satisfy $a(\omega; t,T) = 0$ and $\alpha(\omega; t,T) = 0$ for all $0 \leq T < t < +\infty$, and 
		\begin{equation*}
			\int_0^T\int_0^T |a(s,u)| du d|A|_s < +\infty \quad \text{for all } T \in \mathbb{R}_+,
		\end{equation*}
		\begin{equation*}
			\int_0^T \int_0^T |\alpha(s,u)|d|A|_s \mu_t(du) < +\infty \quad \text{for all } 0 \leq t \leq T,\ T \in \mathbb{R}_+,
		\end{equation*}
		where $d|A|_s$ denotes the Stieltjes measure induced by the total variation $|A|$ of $A$;
		\item the \emph{volatility processes} $b(\cdot \,;\, \cdot,\cdot): \Omega \times \R_+ \times \R_+ \rightarrow \R^d$ and $\beta(\cdot \, ; \, \cdot,\cdot): \Omega \times \R_+ \times \R_+ \rightarrow \R^d$ are $\R^d$-valued and $\mathcal{P}\otimes\mathcal{B}(\mathbb{R}_+)$-measurable, for every $t\in \R_+$, where $\mathcal{P}$ denotes the predictable $\sigma$-field. They satisfy  $b(\omega; t,T) = 0$ and $\beta(\omega; t,T) = 0$ for all $0 \leq T < t < +\infty$, and 
		\begin{equation*}
			\left(\left(\int_0^T |b^i(s,u)|^2 du\right)^{\frac{1}{2}}\right)_{0\leq s \leq T} \in L(X^i),
		\end{equation*}
		\begin{equation*}
			\left(\left(\int_0^T |\beta^i(s,u)|^2 \mu_s(du)\right)^{\frac{1}{2}}\right)_{0\leq s \leq T} \in L(X^i),
		\end{equation*}
		for all $T \in \mathbb{R}_+$, and $i = 1,\dots, d$, where $L(X^i)$ denotes the set of processes which are integrable with respect to $X^i$.
	\end{enumerate}
\end{assumption}
We note that the second integrability condition from Assumption \ref{AssumptionFubiniandEx} (ii) and (iii) ensures the integrability of some integrals in Lemma \ref{eq:reprcRBond}. 

We impose in addition the following assumptions in order to ensure the interchangeability of some integrals in the following calculations.  
\begin{assumption}\label{ass:lackingFubini}
	The following conditions hold a.s.:
	\begin{enumerate}[(i)]
		\item for all $s \leq t \leq T$ and $T\in \R_+$ it holds that
		\begin{align*}
		& \int_0^t \int_0^T  \int_s^t \Ind_{[0,u)}(v-) \alpha(v,u)dA_v \mu(ds,du)\\
		&\quad =  \int_0^t \int_0^v  \int_0^T \Ind_{[0,u)}(v-) \alpha(v,u) \mu(ds,du)dA_v;
		\end{align*}
		\item 	furthermore, for all $s \leq t \leq T$ and $T\in \R_+$ it holds that
		\begin{align*}
			 &\int_0^t \int_0^T \int_s^t \Ind_{[0,u)}(v-)\beta(v,u)dX_v \mu(ds,du)\\
			&\quad  = \int_0^t \int_0^v \int_0^T \Ind_{[0,u)}(v-)\beta(v,u) \mu(ds,du) dX_v.
		\end{align*}
	\end{enumerate}
\end{assumption}
We set for all $0 \leq t \leq T$ and $T \in \R_+$, 
\begin{align}\label{eq:barprocesses}
\bar{a} (t,T) & := \int_t^T a(t,u) du,\nonumber\\
\bar{b} (t,T) & := \int_t^T b(t,u) du,\nonumber\\
\bar{\alpha} (t,T) & := \int_t^T \alpha(t,u) \mu_t(du),\nonumber\\
\bar{\beta} (t,T) & := \int_t^T \beta(t,u) \mu_t(du), 
\end{align}
and 
\begin{align}\label{eq:barjoin}
\bar{A} (t,T) & :=   \bar{a} (t,T) + \bar{\alpha} (t,T) =  \int_t^T a(t,u) du +  \int_t^T \alpha(t,u) \mu_t(du),\nonumber\\
\bar{B} (t,T) & :=   \bar{b} (t,T) + \bar{\beta} (t,T) = \int_t^T b(t,u) du + \int_t^T \beta(t,u) \mu_t(du).
\end{align}

It remains to specify the process $ \xi = (\xi_t)_{t\geq 0}$ from the bond price specification in Equation \eqref{termstructure}, controlling the recovery property. 
We distinguish two cases. First, we consider the simpler case  of zero recovery  in Section \ref{MainResults}. The general case is treated in the following  Section \ref{sec:cRrecovery}.  

\section{Defaultable term structure modeling with zero recovery}\label{MainResults}

Default recovery is the first step in studying credit risky markets. Here the assumption is that at a general stopping time $\tau$, the default time, the company defaults
and the credit risky bond becomes worthless. Combining this with a recovery which is no longer subject to default, i.e.~relying on the interest rate market, leads to the so-called class of recovery of face-value approaches. Here we consider a fully general approach for the stopping time. 

The only assumption we make is that the default time $\tau$ is an $\mathbb{F}$-stopping time. Since the default of a company is public information this is very reasonable.  We define the associated \emph{default indicator process} $H = (H_t)_{0 \leq t\leq T}$ by 
\begin{equation*}
H_t = \Ind_{\{\tau \leq t\}} , \quad \text{ for } 0 \leq t \leq T \text{ and }T \in \mathbb{R}_+. 
\end{equation*}
Since we are working under zero recovery in this section, we assume 
\begin{align} \label{ass:defwithoutrec}
	\xi = (1-H). 
\end{align}
For the following we introduce the following notation: 
\begin{equation}\label{termstructurezerorec}
P(t,T) = (1-H_t)F(t,T)G(t,T),
\end{equation}
where the random fields $F$ and $G$ are induced by the forward rates $f$ and $g$,
\begin{align*}
F(t,T) &:= \exp\left(-\int_t^Tf(t,u)du\right) 
\intertext{ and }  
G(t,T) &:= \exp\left(-\int_{(t,T]}g(t,u)\mu_t(du)\right). 
\end{align*}

A first result provides an alternative representation of credit risky bond prices $P(t,T)$. %

\begin{lemma}\label{eq:reprcRBond}
	Suppose that Assumptions \ref{ramdommeasure} -- \ref{ass:lackingFubini} hold. Under zero recovery, and for every $T>0$, the process $(P(t,T))_{0\leq t \leq T}$ admits the representation 
	$$
	P(t,T) = (1-H_t)\exp\bigl(X_t^{(T)}\bigr), \quad \text{for all } 0 \leq t \leq T,
	$$
	where the semimartingale $X^{(T)} = (X_t^{(T)})_{0 \leq t \leq T}$ is defined by 
	\begin{align}\label{processXT}
	X_t^{(T)} & :=  \int_0^t f(s,s) ds - \int_0^T f(0,u) du - \int_0^t \bar{A}(s,T) dA_s - \int_0^t \bar{B}(s,T) dX_s\nonumber\\
	&   \quad - \int_0^t \int_{(s,T]} g(s,u)  \mu(ds,du) + \int_0^t g(s-,s) d\bar{\mu}_s^{(T)}.
	\end{align}
\end{lemma}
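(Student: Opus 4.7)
Since $P(t,T) = (1-H_t) F(t,T) G(t,T)$ by \eqref{termstructurezerorec}, the statement reduces to the identity
$$-\int_t^T f(t,u)\,du - \int_{(t,T]} g(t,u)\,\mu_t(du) = X_t^{(T)}.$$
My plan is to handle the $f$- and $g$-contributions separately and then combine them via $\bar A = \bar a + \bar\alpha$, $\bar B = \bar b + \bar\beta$ from \eqref{eq:barjoin}.

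\textbf{The $f$-contribution.} First I would substitute the representation \eqref{tsf} into $-\int_t^T f(t,u)\,du$ and interchange the $du$-integration with the $dA_s$- and $dX_s$-integrals using classical and stochastic Fubini theorems, whose hypotheses are precisely the integrability conditions in Assumption~\ref{AssumptionFubiniandEx}(ii)--(iii). The support property $a(s,u) = b(s,u) = 0$ for $u < s$ permits splitting $\int_t^T = \int_s^T - \int_s^t$ inside the resulting outer integrals, and a second Fubini in the reverse direction identifies the short-rate diagonal
$$\int_0^t f(s,s)\,ds = \int_0^t f(0,s)\,ds + \int_0^t \bar a(r,t)\,dA_r + \int_0^t \bar b(r,t)\,dX_r,$$
producing
$$-\int_t^T f(t,u)\,du = -\int_0^T f(0,u)\,du + \int_0^t f(s,s)\,ds - \int_0^t \bar a(s,T)\,dA_s - \int_0^t \bar b(s,T)\,dX_s.$$

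\textbf{The $g$-contribution.} Using $\mu_t(du) = \mu([0,t]\times du)$ and decomposing $(t,T] = (s,T]\setminus(s,t]$ for $s\leq t$, I would write
$$\int_{(t,T]} g(t,u)\,\mu_t(du) = \int_0^t \int_{(s,T]} g(t,u)\,\mu(ds,du) - \int_0^t \int_{(s,t]} g(t,u)\,\mu(ds,du).$$
On the $(s,T]$-piece, substituting $g(t,u) = g(s,u) + \int_s^t \alpha(v,u)\,dA_v + \int_s^t \beta(v,u)\,dX_v$ and applying the dedicated Fubini-type identities in Assumption~\ref{ass:lackingFubini}(i)--(ii), together with the support $\alpha(v,u) = \beta(v,u) = 0$ for $v > u$, collapses the triple integrals to $\int_0^t \bar\alpha(v,T)\,dA_v$ and $\int_0^t \bar\beta(v,T)\,dX_v$; this contributes the $\bar\alpha, \bar\beta$-components of $\bar A$ and $\bar B$ in \eqref{processXT}. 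On the $(s,t]$-piece, the support condition $\mu(\{s \geq u\}) = 0$ from Assumption~\ref{ramdommeasure}(i) allows Tonelli to collapse the $s$-marginal of $\mu(ds,du)$ into $d\bar\mu^{(T)}_u$, reducing the term to $\int_0^t g(s-,s)\,d\bar\mu^{(T)}_s$; the left-limit in the first argument of $g$ is the predictable version selected by integration against atoms of the \cadlag{} increasing process $\bar\mu^{(T)}$.

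\textbf{Combining and main obstacle.} Summing the two contributions and reorganizing via \eqref{eq:barjoin} produces exactly \eqref{processXT}. The $f$-part is routine once Assumption~\ref{AssumptionFubiniandEx} is invoked. The main obstacle will be the $g$-part: the triple-integral interchanges in the $(s,T]$-piece are precisely what Assumption~\ref{ass:lackingFubini} is designed to enable, and identifying the diagonal term as $\int_0^t g(s-,s)\,d\bar\mu^{(T)}_s$ requires careful combination of the support of $\mu$, Tonelli, and the predictable convention for the integrand at atoms of $\bar\mu^{(T)}$.
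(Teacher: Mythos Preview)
Your treatment of the $f$-contribution matches the paper's and is fine. The $g$-contribution, however, is handled quite differently: the paper does \emph{not} decompose $(t,T]=(s,T]\setminus(s,t]$ and substitute \eqref{tsg} directly. Instead it applies the It\^o product rule to the process $v\mapsto \Ind_{[0,u)}(v)\,g(v,u)$, obtaining
\[
\Ind_{[0,u)}(t)g(t,u)=g(0,u)+\int_0^t \Ind_{[0,u)}(v-)\,dg(v,u)-g(u-,u)\Ind_{\{u\le t\}}+[\Ind_{[0,u)}(\cdot),g(\cdot,u)]_t,
\]
and then integrates this identity against $\mu(ds,du)$. The diagonal term $g(u-,u)$ thus appears structurally from the jump of the indicator at $v=u$ in the integration-by-parts formula, and the $\int_s^t$-pieces of the $\alpha,\beta$-integrals are precisely the objects to which Assumption~\ref{ass:lackingFubini} applies.

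Your route has a genuine gap at exactly this diagonal point. Once you extend $g(t,u)$ to $u\le t$ via \eqref{tsg} and the vanishing of $\alpha(v,u),\beta(v,u)$ for $v>u$, you get $g(t,u)=g(u,u)$ for $u\le t$, so your $(s,t]$-piece collapses to $\int_0^t g(u,u)\,d\bar\mu^{(T)}_u$, \emph{not} $\int_0^t g(u-,u)\,d\bar\mu^{(T)}_u$. The sentence ``the left-limit \dots\ is the predictable version selected by integration against atoms of the c\`adl\`ag increasing process $\bar\mu^{(T)}$'' is not a valid justification: Stieltjes integration against $d\bar\mu^{(T)}_u$ uses the integrand's value at $u$, not its left limit. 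The discrepancy $g(u,u)-g(u-,u)=\alpha(u,u)\Delta A_u+\beta(u,u)\Delta X_u$ has to be accounted for, and in your scheme this would require a careful comparison of endpoint conventions in $\bar\alpha(v,T),\bar\beta(v,T)$ and the atoms $\mu_v(\{v\})=\Delta\bar\mu^{(T)}_v$. The paper's integration-by-parts approach sidesteps this by producing $g(u-,u)$ directly together with a covariation term $[\Ind_{[0,u)},g(\cdot,u)]$ that carries the jump correction.
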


\begin{proof}%
	For every $T \in \R_+$ we define the process $X^{(T)} = (X_t^{(T)})_{0 \leq t \leq T}$ as
	\begin{equation}\label{eq:temp}
	X_t^{(T)}  :=  \log(P(t,T)) - \log(1-H_t) = \log(F(t,T)) + \log(G(t,T)).
	\end{equation}
	In a first step we calculate an alternative representation of $\log(F(t,T))$. For this purpose, note that%

	\begin{align*}
	\int_t^T f(t,u) du 
	 		    & =  \int_0^T f(0,u) du + \int_0^t \bar{a}(s,T) dA_s + \int_0^t \bar{b}(s,T) dX_s-\int_0^t f(u,u) du,
	\end{align*}
	following \cite{HJM} with the slight generalization replacing the Lesbesgue measure in the second integral by $dA$. The application of the 
	stochastic Fubini Theorem (\citet{Protter}, Theorem IV.65) is justified due to Assumption \ref{AssumptionFubiniandEx}.

	In the next step, we derive a representation of $G(t,T)$. 
	Decompose  the semimartingale $X = M + \tilde A$ into a local martingale $M$ and a process of finite variation $\tilde A$. The integral $\int_0^\cdot \bar{\beta}(s,T)dX_s$ is well-defined for every $T\in\mathbb{R}_+$ due to Hölder's inequality, Assumption \ref{ramdommeasure} (ii), and Assumption \ref{AssumptionFubiniandEx} (iii). Indeed, it holds for every $i = 1,...,d$
	\begin{align*}
		\int_0^T \bigl( \bar\beta^i(s,T)\bigr)^2 d\langle M^i \rangle_s 
		& = \int_0^T \left(\int_s^T \beta^i(s,u) \mu_s(du)\right)^2 d\langle M^i \rangle_s\\
		& \leq \mu_T([0,T]) \int_0^T \int_0^T |\beta^i(s,u)|^2 \mu_s(du)d\langle M^i\rangle_s < + \infty.
	\end{align*}
	Due to the definition of the process  $\mu(du) = (\mu_t(du))_{t \geq 0}$ it holds that
	\begin{align}
	- \log G(t,T) & =  \int_{(t,T]} g(t,u) \mu_t(du) = \int_0^t \int_{(t,T]} g(t,u) \mu(ds,du) \nonumber\\
	& =   \int_0^t \int_0^T \Ind_{\{u > t\}} g(t,u) \mu(ds,du) \label{logG}.
	\end{align}
	Integration by parts  and Equation \eqref{tsg} yield
	\begin{align}
	\Ind_{[0,u)}(t)g(t,u) & =  g(0,u) + \int_0^t \Ind_{[0,u)}(v-)dg(v,u) \nonumber\\
	& \quad  + \int_0^t g(v-,u)d(\Ind_{[0,u)}(v)) 
	+ [\Ind_{[0,u)}(\cdot ),g(\cdot,u)]_t\nonumber\\
	& =  g(0,u) + \int_0^t \Ind_{[0,u)}(v-) \alpha (v,u)dA_v \nonumber\\
	&\quad + \int_0^t \Ind_{[0,u)}(v-)\beta(v,u)dX_v 
	 -  g(u-,u) \Ind_{\{u \leq t \}} 
	  + [\Ind_{[0,u)}(\cdot),g(\cdot,u)]_t,\label{indg}
	\end{align}
	where both integrals are well-defined due to Assumption \ref{AssumptionFubiniandEx}. %
	Now combine \eqref{logG} and \eqref{indg} which leads to
	\begin{align}
	\int_{(t,T]} g(t,u) \mu_t(du) & =  \int_0^t \int_0^T g(0,u) \mu(ds,du) \nonumber \\
	&   \quad + \int_0^t \int_0^T  \int_0^t \Ind_{[0,u)}(v-) \alpha(v,u)dA_v \mu(ds,du) \nonumber \\
	&   \quad + \int_0^t \int_0^T \int_0^t \Ind_{[0,u)}(v-)\beta(v,u)dX_v \mu(ds,du) \nonumber \\
	&  \quad - \int_0^t \int_0^T g(u-,u) \Ind_{\{u \leq t \}} \mu(ds,du)\nonumber\\
	&   \quad + \int_0^t \int_0^T  [\Ind_{[0,u)}(\cdot),g(\cdot,u)]_t \mu(ds,du)\nonumber\\
	& =:  (i) + (ii) + (iii) + (iv) + (v). \label{divisionminuslogG}
	\end{align}
	We study the terms separately. Regarding (ii), note that $ \int_0^t \Ind_{[0,u)}(v-) \alpha(v,u)dA_v =  \int_{(0,t]} \Ind_{[0,u)}(v-) \alpha(v,u)dA_v$. 
	By means of \eqref{eq:barprocesses}, %
	\begin{align}
	(ii) & =   \int_0^t \int_0^T  \int_0^s \Ind_{[0,u)}(v-) \alpha(v,u)dA_v \mu(ds,du) \nonumber \\
	&   \quad + \int_0^t \int_0^T  \int_s^t \Ind_{[0,u)}(v-) \alpha(v,u)dA_v \mu(ds,du)\nonumber\\
	& =   \int_0^t \int_0^T  \int_0^s \Ind_{[0,u)}(v-) \alpha(v,u)dA_v \mu(ds,du) 
	    + \int_0^t \int_v^T \alpha(v,u) \mu_v(du) dA_v\nonumber\\
	& =   \int_0^t \int_0^T  \int_0^s \Ind_{[0,u)}(v-) \alpha(v,u)dA_v \mu(ds,du) 
	    + \int_0^t \bar{\alpha}(v,T) dA_v. 
	\end{align}
	 Similarly, 
	\begin{align}
	(iii) 
	& =  \int_0^t \int_0^T \int_0^s \Ind_{[0,u)}(v-)\beta(v,u)dX_v \mu(ds,du)  
	  + \int_0^t  \bar{\beta}(v,T) dX_v.\nonumber
	\end{align}
	We rearrange \eqref{indg} to obtain
	\begin{multline}\label{tosummarize2and3}
	\int_0^s \Ind_{[0,u)}(v-) \alpha (v,u)dA_v + \int_0^s \Ind_{[0,u)}(v-)\beta(v,u)dX_v + [\Ind_{[0,u)}(\cdot),g(\cdot,u)]_s \\
	=  \Ind_{[0,u)}(s)g(s,u) - g(0,u) + g(u-,u) \Ind_{\{u \leq s \}}.
	\end{multline}
	Summarizing the previous results, we obtain for \eqref{divisionminuslogG}
	\begin{align}\label{prevresults}
	\int_{(t,T]} g(t,u) \mu_t(du) & =  \int_0^t \int_0^T g(0,u) \mu(ds,du) + \int_0^t \bar{\alpha}(v,T) dA_v \nonumber\\
	&   \quad + \int_0^t  \bar{\beta}(v,T) dX_v
	+ \int_0^t \int_0^T \Ind_{[0,u)}(s)g(s,u)  \mu(ds,du)\nonumber\\
	&   \quad - \int_0^t \int_0^T g(0,u)\mu(ds,du)
	 + \int_0^t \int_0^T g(u-,u) \Ind_{\{u \leq s \}}\mu(ds,du)\nonumber\\
	&   \quad - \int_0^t \int_0^T g(u-,u) \Ind_{\{u \leq t \}} \mu(ds,du) \nonumber\\
	& =   \int_0^t \bar{\alpha}(v,T) dA_v 
	+ \int_0^t  \bar{\beta}(v,T) dX_v 
	 + \int_0^t \int_0^T \Ind_{[0,u)}(s)g(s,u)  \mu(ds,du)\nonumber\\
	&   \quad - \int_0^t \int_0^T \Ind_{(s,t]}(u) g(u-,u) \mu(ds,du).
	\end{align}

	By Assumption \ref{ramdommeasure} (i) and the definition of the process $\bar{\mu}^{(T)}$, we obtain 
	\begin{multline}
	\int_0^t \int_0^T \Ind_{(s,t]}(u) g(u-,u) \mu(ds,du)  =  \int_0^t \int_{(s,t]} g(u-,u) \mu(ds,du)\nonumber\\
	=  \int_0^t \int_0^t g(u-,u) \mu(ds,du) 
	=  \int_0^T \int_0^t g(u-,u) \mu(ds,du) =  \int_0^t g(u-,u) d\bar{\mu}_u^{(T)}.
	\end{multline}

	Finally, combining the calculations, we obtain
	\begin{align*}
	\log G(t,T) & =  -\int_{(t,T]} g(t,u) \mu_t(du)\nonumber \\
	& =  -\int_0^t \bar{\alpha}(s,T) dA_s - \int_0^t  \bar{\beta}(s,T) dX_s - \int_0^t \int_{(s,T]} g(s,u)  \mu(ds,du)
	+ \int_0^t g(u-,u) d\bar{\mu}_u^{(T)},
	\end{align*}
which concludes the proof. 

For the semimartingale property we note that $\bar{\mu}^{(T)}$ is a predictable and increasing process, see \citet[Lemma 2.5]{fontana2018general}. Therefore, the process $\int_0^\cdot g(s-,s)d\bar{\mu}^{(T)}_s$ is a predictable finite variation process. $\int_0^\cdot \int_{(s,T]} g^+(s,u) \mu(ds,du)$ is an optional and increasing process, since $\mu(ds,du)$ is a non-negative optional random measure. Hence, $\int_0^\cdot \int_{(s,T]} g(s,u) \mu(ds,du)$ is of finite variation.
\end{proof}

In a next step we intend to find an alternative representation for the defaultable bond price $P(t,T)$ as a stochastic exponential. %
\begin{lemma} \label{LemmaBondPriceasStochExp}
Suppose that Assumptions \ref{ramdommeasure}\,--\,\ref{ass:lackingFubini} hold. Under zero recovery, the credit risky bond price $P(t,T)$ can be represented as 
	\begin{equation}\label{bondPriceStochasticExponential}
	P(t,T) = \mathcal{E}\left( \tilde{X}^{(T)} - H - [\tilde{X}^{(T)}, H]\right)_t, \quad \text{ for each } 0 \leq t \leq T,
	\end{equation}
	where, for each $T \geq 0$ the process $(\tilde{X}_t^{(T)})_{0 \leq t \leq T}$ is defined as
	\begin{align*} 
		\tilde{X}^{(T)}_t & =  X^{(T)}_t + \frac{1}{2}\int_0^t\bigl(\bar{B}(s,T)\bigr)^2 d\langle X^\textup{c}\rangle_s \nonumber \\
		& + \sum_{0 < s \leq t} \Big( -1 +\bar{A}(s,T) \Delta A_s 
		 + \bar{B}(s,T) \Delta X_s + \int_{(s,T]} g(s,u)  \mu(\{s\}\times du) - g(s-,s) \Delta\bar{\mu}_s^{(T)}  \nonumber\\
		& \phantom{\sum_{0 < s \leq t} \Big(} \quad  + e^{-\bar{A}(s,T) \Delta A_s - \bar{B}(s,T) \Delta X_s- \int_{(s,T]} g(s,u)  \mu(\{s\} \times du) + g(s-,s) \Delta\bar{\mu}_s^{(T)}} \Big).
	\end{align*}
\end{lemma}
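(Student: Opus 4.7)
The proof plan is to factor the representation $P(t,T)=(1-H_t)\exp(X_t^{(T)})$ from Lemma \ref{eq:reprcRBond} as a product of two Dol\'eans--Dade exponentials and then combine them by Yor's product formula $\mathcal{E}(U)\mathcal{E}(V)=\mathcal{E}(U+V+[U,V])$.

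First, I would rewrite $1-H_t=\mathcal{E}(-H)_t$: since $H$ is a pure-jump finite-variation process with $\Delta H_\tau=1$, the Dol\'eans--Dade exponential reduces to $\prod_{s\leq t}(1-\Delta H_s)=\Ind_{\{t<\tau\}}=1-H_t$.

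Second, I would represent $\exp(X^{(T)}_t)$ itself as a stochastic exponential via the standard identity that, for a semimartingale $Y$, $\exp(Y)=\mathcal{E}\bigl(Y+\tfrac{1}{2}\langle Y^{\mathrm{c}}\rangle+\sum_{0<s\leq\cdot}(e^{\Delta Y_s}-1-\Delta Y_s)\bigr)$; see e.g.\ \citet[Ch.~II]{JacodShiryaev}. Applied to $Y=X^{(T)}$, this requires identifying the continuous local martingale part $(X^{(T)})^{\mathrm{c}}$ and the jumps $\Delta X^{(T)}_s$. Inspecting the six summands of $X^{(T)}$ in \eqref{processXT}, only the integral $-\int_0^{\cdot}\bar B(s,T)\,dX_s$ contributes to $(X^{(T)})^{\mathrm{c}}$, so $\langle (X^{(T)})^{\mathrm{c}}\rangle_t=\int_0^t(\bar B(s,T))^2\,d\langle X^{\mathrm{c}}\rangle_s$, matching the quadratic variation term in the formula for $\tilde X^{(T)}$. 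The jump $\Delta X^{(T)}_s$ is the sum of four contributions, namely $-\bar A(s,T)\Delta A_s$, $-\bar B(s,T)\Delta X_s$, $-\int_{(s,T]}g(s,u)\,\mu(\{s\}\times du)$, and $g(s-,s)\Delta\bar\mu_s^{(T)}$; substituting this into $e^{\Delta X^{(T)}_s}-1-\Delta X^{(T)}_s$ reproduces exactly the bracketed summand of the stated formula for $\tilde X^{(T)}$.

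Third, I would apply Yor's product formula with $U=\tilde X^{(T)}$ and $V=-H$, using $[\tilde X^{(T)},-H]=-[\tilde X^{(T)},H]$, to collapse the product $(1-H_t)\exp(X^{(T)}_t)=\mathcal{E}(-H)_t\,\mathcal{E}(\tilde X^{(T)})_t$ into the single stochastic exponential $\mathcal{E}(\tilde X^{(T)}-H-[\tilde X^{(T)},H])_t$, which is \eqref{bondPriceStochasticExponential}.

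The main technical step I expect to be delicate is the correct identification of the jumps of the two measure-valued terms in \eqref{processXT}: the jumps of $\int_0^{\cdot}\int_{(s,T]}g(s,u)\,\mu(ds,du)$ arise from atoms of $\mu$ in the first coordinate and equal $\int_{(s,T]}g(s,u)\,\mu(\{s\}\times du)$, while $\Delta\bar\mu_s^{(T)}=\mu([0,T]\times\{s\})$ records atoms in the second coordinate of $\mu$ over the entire strip $[0,T]$. These two axis-wise contributions are structurally distinct and need not cancel; that they assemble correctly into $\Delta X_s^{(T)}$ relies on the support condition of Assumption \ref{ramdommeasure}(i) together with the càdlàg decomposition of $\bar\mu^{(T)}$ discussed just after \eqref{eq:defmubar}.
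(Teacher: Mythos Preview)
Your proposal is correct and follows essentially the same approach as the paper: first $1-H=\mathcal{E}(-H)$, then $\exp(X^{(T)})=\mathcal{E}(\tilde X^{(T)})$ via the stochastic-logarithm identity (Theorem~II-8.10 in \cite{JacodShiryaev}), and finally Yor's formula to combine them. Your identification of $(X^{(T)})^{\mathrm{c}}$ and of the four jump contributions to $\Delta X^{(T)}_s$ matches the paper's computation exactly.
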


\begin{proof}
	The process $(H_t)_{0\leq t \leq T}$ is a jump process with one single jump of the size one, such that it follows by means of the definition of the stochastic exponential, see  \citet[II-8.2]{JacodShiryaev}, and by noting that $\Delta H_t = \ind{\tau = t}$, that
	\begin{equation*}
		1 - H_t = \ind{\tau > t} =  \prod_{0 < s \leq t}(1-\Delta H_s) = e^{H_0-H_t} \prod_{0 < s \leq t}(1-\Delta H_s)e^{\Delta H_s} = \mathcal{E}(-H)_t.
	\end{equation*}
	
	Due to Theorem II-8.10 from \cite{JacodShiryaev} we can transform $\exp(X_t^{(T)})$ from the representation of Lemma \ref{eq:reprcRBond} to a stochastic exponential, $\exp(X_t^{(T)}) = \mathcal{E}(\tilde{X}^{(T)})_t$ with
	\begin{equation}\label{stochlog}
	\tilde{X}^{(T)}_t =X^{(T)}_t  + \frac{1}{2}\big\langle (X^{(T)}_\cdot)^\textup{c},(X^{(T)}_\cdot)^\textup{c}\big\rangle_t + (e^x-1-x)\ast\mu^{X^{(T)}}_t,
	\end{equation} 
	where we denote by $\mu^{X^{(T)}}$ the random jump measure of $X^{(T)}$, in the sense of \citet[Proposition II-1.16]{JacodShiryaev}. Due to Lemma 2.4 in \cite{fontana2018general}, %
	and keeping in mind that the continuous local martingale part of a finite variation process is zero, one observes that the quadratic variation $\langle (X^{(T)})^\textup{c}, (X^{(T)})^\textup{c}\rangle$ simplifies to:
	\begin{equation}
	\big\langle (X^{(T)})^\textup{c}, (X^{(T)})^\textup{c}\big\rangle_t = \int_0^t \bigl(\bar{B}(s,T)\bigr)^2 d \langle X^\textup{c}\rangle_s.
	\end{equation}
	
	As a next intermediate step we calculate  the jumps of $X_t^{(T)}$,
	\begin{align*}
		\Delta X_t^{(T)} & =  -\bar{A}(t,T) \Delta A_t - \bar{B}(t,T) \Delta X_t
		- \int_{(t,T]} g(t,u)  \mu(\{t\}\times du) \\
		& \quad + g(t-,t) \Delta\bar{\mu}_t^{(T)}.
	\end{align*}

	Inserting those calculations in equation \eqref{stochlog} we obtain
	\begin{align*} 
		\tilde{X}^{(T)}_t & =   X^{(T)}_t + \frac{1}{2}\int_0^t\bigl(\bar{B}(s,T)\bigr)^2 d\langle X^\textup{c}\rangle_s + \sum_{0 < s \leq t} \Big( -1 +\bar{A}(s,T)\Delta A_s  \nonumber\\
		& \quad  + \bar{B}(s,T) \Delta X_s   + \int_{(s,T]} g(s,u)  \mu(\{s\} \times du) - g(s-,s) \Delta\bar{\mu}_s^{(T)} \nonumber\\
		& \quad  + e^{-\bar{A}(s,T)\Delta A_s - \bar{B}(s,T)\Delta X_s- \int_{(s,T]} g(s,u)  \mu(\{s\}\times du) + g(s-,s) \Delta\bar{\mu}_s^{(T)}} \Big).
	\end{align*}
	With the previous calculation and Yor's formula, see \citet[II-8.19]{JacodShiryaev}, the defaultable bond price $P(t,T)$ can be written as stochastic exponential
	\begin{align*}
		P(t,T) & =  (1-H_t) \exp\bigl(X_t^{(T)}\bigr) = \mathcal{E}(-H)_t \mathcal{E}\bigl(\tilde{X}^{(T)}\bigr)_t \\
		& =  \mathcal{E}\bigl(\tilde{X}^{(T)} - H - [\tilde{X}^{(T)},H]\bigr)_t,
	\end{align*}
	which completes the proof.
\end{proof}

As preparation for the next Theorem, containing the main results of this Section, we analyse the semimartingale $\tilde{X}^{(T)} - H - [\tilde{X}^{(T)}, H] $ in the sense that we want to consider separately local martingale parts and finite variation parts. 
We have  the following decompositions for our finite variation processes into a absolutely continuous part, a singular continuous part and a jump part.
The process $A$ admits the representation
\begin{equation}\label{eq:decA}
	A_t = \int_0^t A^{\textup{ac}}_s ds + A^{\textup{sing}}_t + \sum_{0 <s\leq t}\Delta A_s, \quad \text{ for all }t \geq 0,
\end{equation}
where $(A^{\textup{ac}}_t)_{t \geq 0}$ is a non-negative predictable process such that $\int_0^t|A^{\textup{ac}}_s|ds<\infty$ and $(A^{\textup{sing}}_t)_{t \geq 0}$ is an increasing and continuous process with $A^{\textup{sing}}_0 = 0$ such that $dA^{\textup{sing}}_s (\omega) \perp ds $, for almost all $\omega \in \Omega$.
For the $d$-dimensional semimartingale we find the decomposition $X = X^{\textup{c} }+ X^{\textup{d}} + \tilde A$, where $ X^\textup{c}$ is the continuous local martingale part, $ X^\textup{d}$ a purely discontinuous local martingale, and $\tilde A$ a process of finite variation.  The process $\tilde A$ admits the following representation
\begin{equation}\label{eq:dectildeA}
	\tilde A_t = \int_0^t \tilde A^{\textup{ac}}_s ds + \tilde A^{\textup{sing}}_t + \sum_{0 <s\leq t}\Delta \tilde A_s, \quad \text{ for all } t \geq 0,
\end{equation}
where $(\tilde A^{\textup{ac}}_t)_{t \geq 0}$ is a non-negative predictable process such that $\int_0^t|\tilde A^{\textup{ac}}_s|ds<\infty$ and $(\tilde A^{\textup{sing}}_t)_{t \geq 0}$ is an increasing and continuous process with $\tilde A^{\textup{sing}}_0 = 0$ such that $d\tilde A^{\textup{sing}}_s (\omega) \perp ds $, for almost all $\omega \in \Omega$.
Moreover, since $\langle X^\textup{c} \rangle$ is of finite variation and continuous, %
$\langle X^\textup{c}\rangle$ can be written as
\begin{equation}\label{eq:decXc}
	\langle X^\textup{c}\rangle_t = \int_0^t \psi_s ds + \zeta_t, \text{ for all } t \geq 0,
\end{equation}
with a non-negative predictable process $(\psi_t)_{t \geq 0}$ such that $\int_0^t | \psi_s| ds < \infty$ and an increasing and continuous process $(\zeta_t)_{t \geq 0}$ with $\zeta_0 = 0$ such that $d\zeta_s(\omega) \perp ds$, for almost all $\omega \in \Omega$.
Furthermore,  \citet[Lemma 2.5]{fontana2018general} yields that the predictable and increasing process  $\bar{\mu}^{(T)}$ admits the unique decomposition 
\begin{equation}\label{eq:decmubar}
\bar{\mu}_t^{(T)} = \int_0^t m_s ds + \nu_t + \sum_{0<s\leq t} \Delta \bar{\mu}_s^{(T)},  \text{ for all } t \in [0,T], \quad T \in \mathbb{R}_+,
\end{equation}
where $(m_t)_{0 \leq t \leq T}$ is a non-negative and predictable process satisfying $\int_0^T m_s ds < +\infty$ a.\,s. and $(\nu_t)_{0 \leq t \leq T}$ is an increasing and continuous process with $\nu_0 =0$ such that $d\nu_s(\omega)\perp ds$, for almost all $\omega\in \Omega$.

Moreover, we introduce for each $T \in \mathbb{R}_+$ the process $Y^{(T)} = (Y_t^{(T)})_{0 \leq t \leq T}$ defined as 
\begin{equation}\label{processY}
Y_t^{(T)} : = \int_0^t \int_0^T g(s,u) \mu(ds,du) + \int_0^t \bar{A}(s,T) dA_s+ \int_0^t \bar{B}(s,T)dX_s.
\end{equation}
For the benefit of a more compact notation we define the following functions $W^{(1)}: \Omega \times \mathbb{R}_+ \times \mathbb{R} \rightarrow \mathbb{R}$, $W^{(2)}: \Omega \times \mathbb{R}_+ \times \mathbb{R}\times \{0,1\} \rightarrow \mathbb{R}$ as
\begin{align}\label{functionW}
	W^{(1)}(\omega;s,y) & := e^{g(\omega;s-,s)\Delta \bar{\mu}_s^{(T)}(\omega)}(e^{-y}-1),\nonumber\\
	W^{(2)}(\omega;s,y,z) & := W^{(1)}(\omega;s,y)z.
\end{align}
The functions $W^{(1)}$ and $W^{(2)}$ are $\mathcal{P}\otimes \mathcal{B}(\mathbb{R})$-measurable and $\mathcal{P}\otimes \mathcal{B}(\mathbb{R}\times \{0,1\})$-measurable, respectively. We assume that $W^{(1)} \in G_{\textup{loc}}(\mu^{Y^{(T)}})$, ensuring that $W^{(1)}\ast\mu^{Y^{(T)}}$ makes sense as an integral with respect to the random measure $\mu^{Y^{(T)}}$ and we have
\begin{align}\label{Weinsmu}
&\bigl(W^{(1)}\ast \mu^{Y^{(T)}}\bigr)_t \nonumber\\
&\qquad= \sum_{0 < s \leq t} e^{g(s-,s)\Delta \bar{\mu}_s^{(T)}} (e^{-\int_s^T g(s,u) \mu(\{s\}\times du)-\bar{A}(s,T)\Delta A_s - \bar{B}(s,T) \Delta X_s}-1),
\end{align}
with the integer-valued random measure $\mu^{Y^{(T)}}$ of the semimartingale $Y^{(T)}$, in the sense of \citet[Proposition II-1.16]{JacodShiryaev}, with compensator $\mu^{p,Y^{(T)}}$.
Moreover, we note that $W^{(2)}\ast \mu^{(Y^{(T)},H)}$ is integrable since $H$ is a single jump process and we have
\begin{align}\label{eq:WYT}
	&\bigl(W^{(2)}\ast \mu^{(Y^{(T)},H)}\bigr)_t \nonumber\\
	&\quad= 
	\sum_{0 < s \leq t} e^{g(s-,s)\Delta \bar{\mu}_s^{(T)}}
	\left(e^{-\int_s^T g(s,u) \mu(\{s\}\times du)-\bar{A}(s,T)\Delta A_s - \bar{B}(s,T)\Delta X_s}-1\right)\Delta H_s,
\end{align}
with the integer-valued random measure $\mu^{(Y^{(T)},H)}$ of the semimartingale $(Y^{(T)},H)$. 

Some calculations show by means of \cite{JacodShiryaev} III-6.22 a), the definition of the process $\tilde X^{(T)}$ in Lemma \ref{LemmaBondPriceasStochExp}, and Equations \eqref{eq:decA} -- \eqref{eq:WYT} that the semimartingale $\tilde{X}^{(T)} - H - [\tilde{X}^{(T)},H]$ admits the decomposition
\begin{align}\label{decompSemimartingale}
\tilde{X}^{(T)}_t - H_t - [\tilde{X}^{(T)},H]_t 
&=
- \int_0^t \bar{B}(s,T) d(X^{\textup{c}}_s + X^{\textup{d}}_s) + \int_0^t f(s,s) ds\nonumber \\
& - \int_0^T f(0,u) du 
- \int_0^t \bar{A}(s,T) A^{\textup{ac}}_s ds \nonumber\\
& -\int_0^t \bar{A}(s,T) dA^{\textup{sing}}_s
- \int_0^t \bar{B}(s,T) \tilde A^{\textup{ac}}_sds\nonumber\\
& -\int_0^t \bar{B}(s,T) d\tilde A^{\textup{sing}}_s
 - \left(\int_0^\cdot \int_{(s,T]}  g(s,u)  \mu(ds,du)\right)_t^\textup{cont}\nonumber\\
&  + \int_0^t g(s-,s) m_s ds + \int_0^t g(s-,s) d\nu_s  \nonumber\\
& + \frac{1}{2}\int_0^t\bigl(\bar{B}(s,T)\bigr)^2 \psi_s ds 
+ \frac{1}{2}\int_0^t\bigl(\bar{B}(s,T)\bigr)^2 d\zeta_s \nonumber\\
&  + \sum_{0<s\leq t}\bigl((e^{g(s-,s) \Delta\bar{\mu}_s^{(T)}}-1) (1 - \Delta H_s)
+ \bar{B}(s,T) \Delta X^{\textup{d}}_s\bigr)\nonumber \\
&  - H_t + \bigl(W^{(1)}\ast \mu^{Y^{(T)}}\bigr)_t - \bigl(W^{(2)}\ast \mu^{(Y^{(T)},H)}\bigr)_t,
\end{align}
where we denote by $\mu^{Y^{(T)}}$ the integer-valued random measure of the semimartingale $Y^{(T)}$ with compensator $\mu^{p,Y^{(T)}}$. In addition $\mu^{(Y^{(T)},H)}$ denotes the integer-valued random measure of the semimartingale $(Y^{(T)},H)$.

With decomposition \eqref{decompSemimartingale} of the semimartingale $\tilde{X}^{(T)} - H - [\tilde{X}^{(T)},H]$ defining the representation of the defaultable bond price $P(t,T)$  as stochastic exponential \eqref{bondPriceStochasticExponential}, we are now in a position to proof our main Theorem \ref{maindriftcondzerorecovery}.

Theorem \ref{maindriftcondzerorecovery} generalizes \citet[Theorem 3.4]{fontana2018general}, considering the present setting in the special case where the forward rates are driven by continuous It\^{o}-processes. The following result provides necessary and sufficient conditions for the reference probability measure $\Q$ to be an ELMM for the credit risky financial market with respect to the numeraire $B= \exp(\int_0^\cdot r_t dt)$.
It turns out that in order to ensure no arbitrage in the case of forward rates driven by general non-continuous semimartingales, there is the need of compensating terms, arising in addition to the classical HJM conditions.

As a preliminary, we recall \citet[Lemma 2.1]{fontana2018general}, stating that the compensator $H^p$ of the default indicator process, admits a unique decomposition
\begin{equation}\label{eq:defcomp}
	H_t^p = \int_0^t h_s ds + \lambda_t + \sum_{0<s\leq t} \Delta H_s^p,
\end{equation}
where $(h_t)_{t \geq 0}$ is a non-negative predictable process such that $\int_0^t h_s ds < +\infty$ almost surely and $(\lambda_t)_{t \geq 0}$ is an increasing continuous process with $\lambda_0 =0$ such that $d\lambda_s(\omega) \perp ds$, for almost all $\omega \in \Omega$.
\begin{theorem}\label{maindriftcondzerorecovery}
	Suppose that Assumptions \ref{ramdommeasure} -- \ref{ass:lackingFubini} hold and consider the zero-recovery case. Let $W^{(1)}$ and $W^{(2)}$ be defined as in \eqref{functionW}. Then $\mathbb{Q}$ is an ELMM for the credit risky financial market with respect to the num\'{e}raire $B = \exp(\int_0^\cdot r_s ds)$, if and only if 
	\begin{equation}\label{eq:intloc}
		 \sum_{0<s\leq t}\bigl((e^{g(s-,s) \Delta\bar{\mu}_s^{(T)}}-1) (1 - \Delta H_s)
		+ \bar{B}(s,T) \Delta X^{\textup{d}}_s\bigr)	 \in \mathcal{A}_{\textup{loc}},
	\end{equation}
	a.s for every $T \in \R_+$ and the following conditions hold almost surely:
	\begin{enumerate}[(i)]
		\item for all $0 \leq t \leq T$ and $T \in \mathbb{R}_+$, it holds that 
		\begin{align*}
			\Delta H_t^p = 1 - e^{-g(t-,t) \Delta\bar{\mu}_t^{(T)}} 
			\bigg(& 1  
			-\Delta \Big( \sum_{0<s\leq \cdot}\bar{B}(s,T) \Delta X^{\textup{d}}_s\Big)^p_t
			- \Delta(W^{(1)}\ast \mu^{p,Y^{(T)}})_t  \nonumber\\
			&
			+ \Delta(W^{(2)}\ast \mu^{p,(Y^{(T)},H)})_t\bigg);
		\end{align*}
		\item for all $T\in \mathbb{R}_+$ and for Lebesgue almost every $t \in [0,T]$, it holds that
				\begin{align*}
		r_t & =  f(t,t) - \bar{A}(t,T) A^{\textup{ac}}_t - \left(\int_0^\cdot \int_{(s,T]} g(s,u)  \mu(ds,du)\right)_t^{\textup{ac}}+ g(t-,t) m_t - h_t\nonumber\\
		&  \quad  - \bar{B}(t,T) \tilde A^{\textup{ac}}_t   + \frac{1}{2}\bigl(\bar{B}(t,T)\bigr)^2 \psi_t
		+ \bigl(W^{(1)}\ast \mu^{p,Y^{(T)}}\bigr)_t^{\textup{ac}} \nonumber\\
		&\quad - \bigl(W^{(2)}\ast \mu^{p,(Y^{(T)},H)}\bigr)_t^{\textup{ac}} +\Big( \sum_{0<s\leq t}\bar{B}(s,T) \Delta X^{\textup{d}}_s\Big)^{p,\textup{ac}};
		\end{align*}
		\item for all $0 \leq t \leq T$, $T \in \mathbb{R}_+$, it holds that
		\begin{align*}
				\lambda_t & =  \int_0^t g(s-,s)d \nu_s - \int_0^t  \bar{A}(s,T) d A^{\textup{sing}}_s  - \int_0^t \bar{B}(s,T) d\tilde A^{\textup{sing}}_s\\
				& \quad - \left(\int_0^\cdot \int_{(s,T]} g(s,u)  \mu(ds,du)\right)_t^{\textup{sing}} 
				+ \frac{1}{2} \int_0^t \bigl( \bar{B}(s,T)\bigr)^2 d \zeta_s  \\
				&\quad + \bigl(W^{(1)}\ast \mu^{p,Y^{(T)}}\bigr)_t^{\textup{sing}}- \bigl(W^{(2)}\ast \mu^{p,(Y^{(T)},H)}\bigr)_t^{\textup{sing}} \\
				&\quad
				+\Big( \sum_{0<s\leq t}\bar{B}(s,T) \Delta X^{\textup{d}}_s\Big)^{p,\textup{sing}}.
		\end{align*}
	\end{enumerate}
\end{theorem}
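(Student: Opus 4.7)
The plan is to reduce the ELMM property to a local-martingale condition on the discounted credit-risky bond prices $\bar P(\cdot, T) := B_\cdot^{-1} P(\cdot, T)$ for each $T \in \R_+$, and then to identify the three canonical Lebesgue constituents of its driver with the three conditions of the theorem.

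First, since $B$ is continuous and of finite variation with $B_0 = 1$, one has $B_\cdot^{-1} = \mathcal{E}(-\int_0^\cdot r_s ds)$, and every covariation involving $\int_0^\cdot r_s ds$ vanishes. Combining this with Lemma \ref{LemmaBondPriceasStochExp} and Yor's formula (\citet[II-8.19]{JacodShiryaev}) gives
\[
\bar P(\cdot, T) = \mathcal{E}\bigl(Z^{(T)}\bigr), \qquad Z^{(T)} := \tilde X^{(T)} - H - [\tilde X^{(T)}, H] - \int_0^\cdot r_s ds.
\]
Since the stochastic exponential of a semimartingale is a local martingale if and only if the semimartingale itself is, $\Q$ is an ELMM iff $Z^{(T)}$ is a local $\Q$-martingale for each $T \in \R_+$.

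Next, I would use the explicit decomposition \eqref{decompSemimartingale} to write $Z^{(T)} = M^{(T)} + V^{(T)}$ with $M^{(T)}$ a local martingale and $V^{(T)}$ predictable of finite variation. The integrability assumption \eqref{eq:intloc} is precisely what is needed to guarantee that the jump sum appearing in \eqref{decompSemimartingale} is a special semimartingale with a well-defined predictable compensator. Combined with the compensators $H^p$ of $H$ (see \eqref{eq:defcomp}), $W^{(1)} \ast \mu^{p, Y^{(T)}}$ of $W^{(1)} \ast \mu^{Y^{(T)}}$, and $W^{(2)} \ast \mu^{p, (Y^{(T)}, H)}$ of $W^{(2)} \ast \mu^{(Y^{(T)}, H)}$, this identifies both $M^{(T)}$ and $V^{(T)}$. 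By uniqueness of the canonical decomposition, $Z^{(T)}$ is a local martingale iff $V^{(T)} \equiv 0$.

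Finally, $V^{(T)}$ is a sum of terms of three different natures: (a) absolutely continuous with respect to $ds$, (b) singular continuous, and (c) purely discontinuous. Invoking the Lebesgue decompositions \eqref{eq:decA}, \eqref{eq:dectildeA}, \eqref{eq:decXc}, \eqref{eq:decmubar} of $A$, $\tilde A$, $\langle X^\textup{c}\rangle$ and $\bar\mu^{(T)}$, together with \eqref{eq:defcomp} for $H^p$ and the analogous decompositions of $W^{(1)} \ast \mu^{p, Y^{(T)}}$, $W^{(2)} \ast \mu^{p, (Y^{(T)}, H)}$ and of the predictable compensator of $\sum \bar B(s,T) \Delta X^\textup{d}_s$, the three components of $V^{(T)}$ are uniquely identified. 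The requirement that each component vanishes then yields, by matching densities against $ds$, the singular continuous measures, and the predictable jumps respectively, exactly conditions (ii), (iii) and (i). In (i) the factor $e^{-g(t-,t)\Delta\bar\mu_t^{(T)}}$ emerges from $\Delta\tilde X^{(T)}_t$ through the identity $\Delta\mathcal{E}(X)_t = \mathcal{E}(X)_{t-}\Delta X_t$ applied inside the exponential representation.

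The main obstacle will be the bookkeeping of the jump part: identifying the compensator of the sum appearing in \eqref{eq:intloc} by routing the relevant jumps through the integer-valued random measures $\mu^{Y^{(T)}}$ and $\mu^{(Y^{(T)}, H)}$ via the identities \eqref{Weinsmu}--\eqref{eq:WYT}, and then cleanly distributing the predictable compensator across the three Lebesgue components without double-counting the default jump $\Delta H$ against $\Delta X^\textup{d}$. Once this canonical-plus-Lebesgue decomposition of $V^{(T)}$ is in place, necessity and sufficiency of (i)--(iii) follow from uniqueness.
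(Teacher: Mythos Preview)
Your proposal is correct and follows essentially the same route as the paper's proof: Yor's formula gives $B^{-1}P(\cdot,T)=\mathcal{E}(Z^{(T)})$, then the explicit decomposition \eqref{decompSemimartingale} together with compensation of $H$, $W^{(1)}\ast\mu^{Y^{(T)}}$, $W^{(2)}\ast\mu^{(Y^{(T)},H)}$ and the jump sum yields the canonical decomposition, and the Lebesgue split of the predictable finite-variation part produces exactly (i)--(iii). The only point worth making explicit is that in the ``only if'' direction \eqref{eq:intloc} is a \emph{conclusion} rather than a hypothesis---the paper obtains it via \citet[I-3.11]{JacodShiryaev} (the finite-variation part of a local martingale is automatically of locally integrable variation)---so your write-up should separate the two directions accordingly; also, the factor $e^{-g(t-,t)\Delta\bar\mu_t^{(T)}}$ in (i) arises simply by solving the jump equation $(e^{g(t-,t)\Delta\bar\mu_t^{(T)}}-1)(1-\Delta H_t^p)-\Delta H_t^p+\cdots=0$ algebraically for $\Delta H_t^p$, not from the stochastic-exponential identity you mention.
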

The first condition is directed at the jump parts of the semimartingale describing discounted bond prices. It describes that the jumps of the default compensator are in a precise connection with the jumps of the process $\bar{\mu}^{(T)}$ and our driving processes $X$ and $A$ of the forward rates. The second condition arises from the absolutely continuous part of the semimartingale describing discounted bond prices and is a generalization of the well known drift condition of the HJM framework. The third condition poses a requirement for the singular continuous part of the semimartingale describing discounted bond prices and states a precise matching condition of the singular part of the default compensator and the singular parts from the remaining processes. Due to the generality of the setting, the conditions of Theorem \ref{maindriftcondzerorecovery} are quite involved.

\begin{proof}
	By definition, $\Q$ is an ELMM with respect to the numeraire $B$ if and only if $P(\cdot,T)/B$ is a $\Q$-local martingale for every $T\in \R_+$. By means of Lemma~\ref{LemmaBondPriceasStochExp}, the definition of the num\'{e}raire, see \eqref{eq:numeraire}, and Yor's formula, see \citet[II-8.19]{JacodShiryaev}, discounted credit risky bond prices can be represented as
	\begin{align}\label{eq:dissum}
	\frac{P(t,T)}{B_t} & =  \frac{\mathcal{E}\bigl(\tilde{X}^{(T)} - H - [\tilde{X}^{(T)},H]\bigr)_t}{\mathcal{E}\bigl(\int_0^\cdot r_s ds\bigr)_t} \nonumber\\
	& =  \mathcal{E}\bigl(\tilde{X}^{(T)} - H - [\tilde{X}^{(T)},H]\bigr)_t\mathcal{E}\left(-\int_0^\cdot r_s ds\right)_t\nonumber\\
	& =  \mathcal{E}\left(\tilde{X}^{(T)} - H - [\tilde{X}^{(T)},H]-\int_0^\cdot r_s ds\right)_t.
	\end{align}
 The process $(P(t,T)/B_t)_{0 \leq t \leq T}$ is a $\Q$-local martingale, for every $T \in \mathbb{R}_+$, if and only if the predictable finite variation term of the semimartingale $\tilde{X}^{(T)} - H - [\tilde{X}^{(T)},H]-\int_0^\cdot r_s ds$ vanish. %
 
 		First, we assume that $P(\cdot,T)/B$ is a $\Q$-local martingale, for every $T \in \R_+$. By \citet[I-3,11]{JacodShiryaev} the finite variation part of \eqref{eq:dissum} is of locally integrable variation.
	 By compensating the process $H$ and keeping in mind the decomposition of the compensator of the default indicator process \eqref{eq:defcomp} we obtain 
	\begin{align*}
		\sum_{0<s\leq t} (e^{g(s-,s)\Delta\bar{\mu}_s^{(T)}}-1)\Delta H_s 
		& =  \int_0^t (e^{g(s-,s)\Delta\bar{\mu}_s^{(T)}} -1) dH_s \\
		& =  (\text{local martingale})_t + \int_0^t (e^{g(s-,s)\Delta\bar{\mu}_s^{(T)}} -1) dH_s^p\\
		& =  (\text{local martingale})_t + \sum_{0< s \leq t} (e^{g(s-,s)\Delta\bar{\mu}_s^{(T)}} -1) \Delta H_s^p.
	\end{align*}

	As can now be deduced from the representation \eqref{decompSemimartingale} by compensating $H$, $W^{(1)}\ast \mu^{Y^{(T)}}$ and  $W^{(2)}\ast \mu^{(Y^{(T)},H)}$, it holds that
	\begin{align}\label{eq:discpricerepresentation}
	&\tilde{X}^{(T)}_t - H_t - [\tilde{X}^{(T)},H]_t-\int_0^t r_s ds 
	 = 	M(t,T)\nonumber\\
	& \quad -\int_0^t r_s ds + \int_0^t f(s,s) ds 
	 - \int_0^t \bar{A}(s,T) A^{\textup{ac}}_s ds - \int_0^t \bar{A}(s,T) d A^{\textup{sing}}_s \nonumber\\
	 &\quad - \int_0^t \bar{B}(s,T) \tilde A^{\textup{ac}}_sds - \int_0^t \bar{B}(s,T) d\tilde A^{\textup{sing}}_s\nonumber\\
	& \quad - \left(\int_0^\cdot \int_{(s,T]} g(s,u)  \mu(ds,du)\right)_t^\textup{cont}
	 + \int_0^t g(s-,s) m_s ds + \int_0^t g(s-,s) d\nu_s\nonumber\\
	& \quad + \frac{1}{2}\int_0^t\bigl(\bar{B}(s,T)\bigr)^2 \psi_s ds + \frac{1}{2}\int_0^t\bigl(\bar{B}(s,T)\bigr)^2 d\zeta_s \nonumber \\
	& \quad - \int_0^t h_s ds -\lambda_t
	 + \sum_{0<s\leq t} \left[\left(e^{g(s-,s) \Delta\bar{\mu}_s^{(T)}} -1\right)(1-\Delta H_s^p) - \Delta H_s^p\right]\nonumber\\
	 &\quad +\Big( \sum_{0<s\leq t}\bar{B}(s,T) \Delta X^{\textup{d}}_s\Big)^p
	 + (W^{(1)}\ast \mu^{p,Y^{(T)}})_t - (W^{(2)}\ast \mu^{p,(Y^{(T)},H)})_t,
	\end{align}
	where $M(\cdot,T)$ denotes the local martingale
	\begin{align*}
		M(t,T) &=  - \int_0^T f(0,u) du  - \int_0^t \bar{B}(s,T) d(X^{\textup{c}}_s + X^{\textup{d}}_s)  -(H_t - H^p_t) \\
		& \quad + \int_0^t (e^{g(s-,s)\Delta\bar{\mu}_s^{(T)}} -1) dH_s- \int_0^t (e^{g(s-,s)\Delta\bar{\mu}_s^{(T)}} -1) dH_s^p\\
		& \quad  + \sum_{0<s\leq t}\bar{B}(s,T) \Delta X^{\textup{d}}_s -  \Big( \sum_{0<s\leq t}\bar{B}(s,T) \Delta X^{\textup{d}}_s\Big)^p\\
		&\quad +\bigl(W^{(1)}\ast \mu^{Y^{(T)}}\bigr)_t  - \bigl(W^{(1)}\ast \mu^{p,Y^{(T)}}\bigr)_t \\
		&\quad - \bigl(W^{(2)}\ast \mu^{(Y^{(T)},H)}\bigr)_t + \bigl(W^{(2)}\ast \mu^{p,(Y^{(T)},H)}\bigr)_t.
	\end{align*}
 Being a local martingale for every $T \in \R_+$, implies that  the predictable finite variation part in \eqref{eq:dissum} vanishes, see \citet[I-3.16]{JacodShiryaev}. We investigate separately the absolutely continuous, singular, and jump parts of the predictable finite variation part of \eqref{eq:discpricerepresentation}. We start with the first step to analyse the jump parts and it holds that
		\begin{align}\label{eq:jumppart}
		0 &= \left(e^{g(t-,t) \Delta\bar{\mu}_t^{(T)}} -1\right)(1-\Delta H_t^p)  -\Delta H_t^p + \Delta\bigl(W^{(1)}\ast \mu^{p,Y^{(T)}}\bigr)_t  \nonumber\\
		&\quad+\Delta \Big( \sum_{0<s\leq \cdot}\bar{B}(s,T) \Delta X^{\textup{d}}_s\Big)^p_t- \Delta\bigl(W^{(2)}\ast \mu^{p,(Y^{(T)},H)}\bigr)_t,
		\end{align}
		which corresponds to condition (i).
		
	We continue with the second step and investigate the continuous singular part of the finite variation terms appearing in \eqref{eq:discpricerepresentation} and it follows that
	\begin{align*}
		0 &= - \int_0^t  \bar{A}(s,T) d A^{\textup{sing}}_s  - \int_0^t \bar{B}(s,T) d\tilde A^{\textup{sing}}_s- \left(\int_0^\cdot \int_{(s,T]} g(s,u)  \mu(ds,du)\right)_t^{\textup{sing}} \\
		& \quad + \int_0^t g(s-,s)d \nu_s 
		    + \frac{1}{2} \int_0^t \bigl( \bar{B}(s,T)\bigr)^2 d \zeta_s + \bigl(W^{(1)}\ast \mu^{p,Y^{(T)}}\bigr)_t^{\textup{sing}} \\
		    &\quad - \bigl(W^{(2)}\ast \mu^{p,(Y^{(T)},H)}\bigr)_t^{\textup{sing}} 
		     +\Big( \sum_{0<s\leq t}\bar{B}(s,T) \Delta X^{\textup{d}}_s\Big)^{p,\textup{sing}}- \lambda_t,
	\end{align*}
	for all $0 \leq t \leq T$, $T\in\mathbb{R}$, which yields condition (iii).\\
	As a third step, we consider the densities of the absolutely continuous part of the finite variation terms from \eqref{eq:discpricerepresentation}.
	For all $0 \leq t \leq T$, $T \in \mathbb{R}_+$ it must hold that
		\begin{align} \label{eq:acpart}
			r_t & =  f(t,t) - \bar{A}(t,T) A^{\textup{ac}}_t - \left(\int_0^\cdot \int_{(s,T]} g(s,u)  \mu(ds,du)\right)_t^{\textup{ac}}+ g(t-,t) m_t - h_t\nonumber\\
			&  \quad  - \bar{B}(t,T) \tilde A^{\textup{ac}}_t   + \frac{1}{2}\bigl(\bar{B}(t,T)\bigr)^2 \psi_t
			+ \bigl(W^{(1)}\ast \mu^{p,Y^{(T)}}\bigr)_t^{\textup{ac}} - \bigl(W^{(2)}\ast \mu^{p,(Y^{(T)},H)}\bigr)_t^{\textup{ac}}\nonumber\\
			&\quad  +\Big( \sum_{0<s\leq t}\bar{B}(s,T) \Delta X^{\textup{d}}_s\Big)^{p,\textup{ac}},
		\end{align}
		yielding condition (ii).
		
	Conversely, if the integrability condition \eqref{eq:intloc} is fulfilled, we obtain the locally integrable variation of the finite variation processes in \eqref{decompSemimartingale}. Therefore we are able to compensate the finite variation processes in \eqref{decompSemimartingale} and obtain representation \eqref{eq:discpricerepresentation}. By means of condition (i) -- (iii), the local martingale property of $P(\cdot,T)/B$, for every $T\in\R_+$ directly follows.
\end{proof}

\section{Defaultable term structure modeling with general recovery schemes} \label{sec:cRrecovery}
In this section we study the case with general recovery. As already explained in the introduction, there are many possible specifications
and modelling approaches for recovery, and in particular a high uncertainty exists in this part. We therefore pose only minimal assumptions on the recovery scheme, and
assume that  $\xi$, is $\mathbb{F}$-adapted, \cadlag \ decreasing, non-negative with $\xi_0 =1$. 
Then there exists a decreasing \cadlag \ process $R$ with bounded jump size $-1 \leq \Delta R \leq 0$, such that 
$$\xi = \mathcal{E}(R). $$ The process $R$ is \cadlag \ and has bounded jumps, hence it is locally bounded and therefore special.

We denote by $$ \tau:=\inf\{t \in [0,T]: \xi_t = 0\}$$ the time point where our recovery process reaches 0, corresponding to the time point where the credit risky bond becomes totally worthless. Note that there could be many defaults before this point is reached. From a technical perspective,  before $\tau$, the recovery process is able to control the amount of loss of the value of the bond price at several credit events. After $\tau$, since then the multiplying factor $\xi$ vanishes and hence bond prices as well, all other remaining parameters may vary freely, thus no no conditions on forward rates are needed after $\tau$.

We note that the continuous local martingale part $R^{\textup{c}}$ of the finite variation process $R$ is  zero. Hence, the canonical decomposition of the process $R$ is %
\begin{equation*}
	R_t = \left(x \ast \bigl(\mu^R - \mu^{p,R}\bigr)\right)_t - C_t, \quad \text{for all } 0 \leq t \leq T, T\in \mathbb{\R}_+,
\end{equation*}
where $\mu^R$ denotes the integer-valued random jump measure of $R$, $\mu^{p,R}$ its compensator, and $(C_t)_{0 \leq t \leq T}$, $T\in \mathbb{\R}_+$ is an increasing predictable process such that $\Delta C_t = - \int_{[-1,0]}x \mu^{p,R}(\{t\},dx)$, for all $t \in [0,T]$, $T\in \mathbb{\R}_+$.

The starting point in this section is therefore the following term-structure of credit risky bonds with general recovery, 
\begin{equation}\label{eq:prices_rec}
P(t,T) = \mathcal{E}(R)_t \exp \left(  -\int_t^Tf(t,u)du-\int_{(t,T]}g(t,u)\mu_t(du) \right), \qquad 0 \le t \le T .
\end{equation}
Note that this representation is a generalization of the term structure with zero recovery. %
Recall the function $W^{(1)}: \Omega \times \mathbb{R}_+ \times \mathbb{R} \rightarrow \mathbb{R}$ from \eqref{functionW}  and define the function $W^{(3)}: \Omega \times \mathbb{R}_+ \times \mathbb{R}\times [-1,0] \rightarrow \mathbb{R}$ as
\begin{align}\label{functionW3}
W^{(3)}(\omega;s,y,x) & := W^{(1)}(\omega;s,y)x,
\end{align}
where $W^{(3)}$ is $\mathcal{P}\otimes \mathcal{B}(\R \times [-1,0])$-measurable. 

\subsection{Absence of arbitrage}
As already mentioned, arbitrage-free credit risky markets can be characterized in terms of equivalent local martingale measures (ELMM). In this section, we seek necessary and sufficient conditions for the reference probability measure $\mathbb{Q}$ to  be an ELMM .

We recall the process $Y^{(T)}$ defined in \eqref{processY} and  denote by $\mu^{(Y^{(T)},R)}$ the random jump measure associated to the two-dimensional semimartingale $(Y^{(T)}, R)$ with its compensator $\mu^{p,(Y^{(T)},R)}$. Moreover, we decompose the process $C$ as
\begin{equation}\label{eq:decC}
C_t = \int_0^t C^{\textup{ac}}_s ds + C^{\textup{sing}}_t + \sum_{0 <s\leq t}\Delta C_s, \quad \text{ for all }t \geq 0,
\end{equation}
where $(C^{\textup{ac}}_t)_{t \geq 0}$ is a non-negative predictable process such that $\int_0^t|C^{\textup{ac}}_s|ds<\infty$ and $(C^{\textup{sing}}_t)_{t \geq 0}$ is an increasing and continuous process with $C^{\textup{sing}}_0 = 0$ such that $dC^{\textup{sing}}_s (\omega) \perp ds $, for almost all $\omega \in \Omega$.

The following theorem  states the desired conditions to ensure the absence of arbitrage. It  generalizes \citet[Theorem 3.12]{fontana2018general} where only continuous forward rates $f$ and $g$ were considered. 
\begin{theorem}\label{maindriftcondgeneralrecovery}
	Suppose that Assumptions \ref{ramdommeasure} -- \ref{ass:lackingFubini}  hold. %
	Then the probability measure $\mathbb{Q}$ is an ELMM, %
	if and only if 	
	\begin{equation}\label{eq:intlocrec}
	\sum_{0<s\leq t}\bigl((e^{g(s-,s) \Delta\bar{\mu}_s^{(T)}}-1) (1 + \Delta R_s)
	+ \bar{B}(s,T) \Delta X^{\textup{d}}_s\bigr)	 \in \mathcal{A}_{\textup{loc}},
	\end{equation}
	a.s for every $T \in \R_+$ and the following conditions hold almost surely:
	\begin{enumerate}[(i)]
		\item for all $T\in \mathbb{R}_+$ and for Lebesgue almost every $t \in [0,T]$, it holds that
			\begin{align*} %
		r_t &=  f(t,t) - \bar{A}(t,T) A^{\textup{ac}}_t  	- \bar{B}(t,T) \tilde A^{\textup{ac}}_t 
		- \left(\int_0^\cdot \int_{(s,T]} g(s,u)  \mu(ds,du)\right)^{\textup{ac}}_t  \nonumber\\
		&  \quad  +  g(t-,t) m_t + \frac{1}{2}\bigl(\bar{B}(t,T) \bigr)^2 \psi_t + \Big( \sum_{0<s\leq \cdot}\bar{B}(s,T) \Delta X^{\textup{d}}_s\Big)^{p,\textup{ac}}_t \\
		&\quad +  \bigl(W^{(1)} \ast \mu^{p,Y^{(T)}}\bigr)^{\textup{ac}}_t  + \bigl(W^{(3)} \ast \mu^{p,(Y^{(T)},R)}\bigr)^{\textup{ac}}_t -C^{\textup{ac}}_t;
		\end{align*}
		\item for all $0 \leq t \leq T$ and $T\in \mathbb{R}_+$, it holds that
		\begin{align*}
			\Delta C_t  = 1- e^{-g(t-,t) \Delta\bar{\mu}_t^{(T)}}
			\bigg(1&-\Delta \bigl(W^{(1)} \ast \mu^{p,Y^{(T)}}\bigr)_t  - \Delta \bigl(W^{(3)} \ast \mu^{p,(Y^{(T)},R)}\bigr)_t \\
			&-\Delta \Big( \sum_{0<s\leq \cdot}\bar{B}(s,T) \Delta X^{\textup{d}}_s\Big)^p_t\bigg);
		\end{align*}
		\item for all $0 \leq t \leq T$, $T \in \mathbb{R}_+$, it holds that
		\begin{align*} 
		C^{\textup{sing}}_t &= - \int_0^t \bar{A}(s,T) dA^{\textup{sing}}_s - \int_0^t \bar{B}(s,T) d\tilde A^{\textup{sing}}_s\\
		&\quad - \left(\int_0^t \int_{(s,T]} g(s,u)  \mu(ds,du)\right)^{\textup{sing}}  + \int_0^t g(s-,s) d\nu_s \\
		&\quad+ \frac{1}{2}\int_0^t\bigl(\bar{B}(s,T) \bigr)^2 d\zeta_s
		+ \Big( \sum_{0<s\leq \cdot}\bar{B}(s,T) \Delta X^{\textup{d}}_s\Big)^{p,\textup{sing}}_t \\
		&\quad+  \bigl(W^{(1)} \ast \mu^{p,Y^{(T)}}\bigr)_t^{\textup{sing}}  + \bigl(W^{(3)} \ast \mu^{p,(Y^{(T)},R)}\bigr)_t^{\textup{sing}}.
		\end{align*}
	\end{enumerate}
\end{theorem}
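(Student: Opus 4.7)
The plan is to parallel the proof of Theorem \ref{maindriftcondzerorecovery}, with the default factor $(1-H) = \mathcal{E}(-H)$ replaced by the general recovery factor $\mathcal{E}(R)$. First, I observe that the computation of $X_t^{(T)}$ in Lemma \ref{eq:reprcRBond} depends only on the forward rates $f$ and $g$ (not on $\xi$), so that $P(t,T) = \mathcal{E}(R)_t \exp(X_t^{(T)})$ continues to hold. Applying the stochastic-exponential transform of Lemma \ref{LemmaBondPriceasStochExp} yields $\exp(X_t^{(T)}) = \mathcal{E}(\tilde{X}^{(T)})_t$, and Yor's formula combined with the num\'eraire then gives
\begin{equation*}
\frac{P(t,T)}{B_t} = \mathcal{E}\Bigl(R + \tilde{X}^{(T)} + [R, \tilde{X}^{(T)}] - \int_0^\cdot r_s\, ds\Bigr)_t.
\end{equation*}
Thus $\mathbb{Q}$ is an ELMM if and only if, for every $T$, the predictable finite-variation part of this driving semimartingale vanishes.

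Next I would decompose this driving semimartingale into a local martingale and a predictable finite-variation piece. The canonical decomposition of $R$ contributes $-C$ to the finite-variation part and $x\ast(\mu^R - \mu^{p,R})$ to the local-martingale part. The semimartingale $\tilde{X}^{(T)}$ decomposes essentially as in \eqref{decompSemimartingale}, but with the $-H$ and $-[\tilde{X}^{(T)},H]$ summands removed and with the jumps of $\tilde X^{(T)}$ written using $W^{(1)}\ast \mu^{Y^{(T)}}$ together with the pointwise remainder $\sum(e^{g(s-,s)\Delta\bar\mu_s^{(T)}}-1)$. The new ingredient is $[R, \tilde{X}^{(T)}]$: since $R$ has finite variation, $[R,\tilde{X}^{(T)}]_t = \sum_{s \leq t} \Delta R_s \Delta\tilde{X}^{(T)}_s$, and using the identity $\Delta\tilde{X}^{(T)}_s = e^{\Delta X_s^{(T)}} - 1 = e^{g(s-,s)\Delta\bar{\mu}_s^{(T)} - \Delta Y_s^{(T)}} - 1$ one obtains
\begin{equation*}
\Delta R_s\,\Delta\tilde{X}^{(T)}_s = W^{(3)}(s, \Delta Y^{(T)}_s, \Delta R_s) + \Delta R_s\bigl(e^{g(s-,s)\Delta\bar{\mu}_s^{(T)}} - 1\bigr),
\end{equation*}
so $[R, \tilde{X}^{(T)}]$ splits into an integral against $\mu^{(Y^{(T)},R)}$ with compensator $\mu^{p,(Y^{(T)},R)}$ plus a pointwise residual. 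Under hypothesis \eqref{eq:intlocrec} all jump sums that appear have locally integrable variation, so each of the required compensators exists.

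Having these decompositions in hand, I would apply the splits \eqref{eq:decA}, \eqref{eq:dectildeA}, \eqref{eq:decXc}, \eqref{eq:decmubar} and \eqref{eq:decC} of $A$, $\tilde A$, $\langle X^{\textup c}\rangle$, $\bar\mu^{(T)}$ and $C$, collect all predictable finite-variation terms, and partition them into absolutely continuous (with respect to $ds$), singular continuous, and jump components. The mutual singularity of these three classes forces each to vanish separately. Matching absolutely continuous densities produces condition (i); matching singular continuous parts produces condition (iii); and the jump relation $(1+\Delta R_t)e^{\Delta X_t^{(T)}} - 1 - \Delta(\text{compensators})_t = 0$, after factoring $e^{g(t-,t)\Delta\bar\mu_t^{(T)}}$ and using $\Delta C_t = -\int x\,\mu^{p,R}(\{t\},dx)$, rearranges to condition (ii). The converse is obtained by reading the same calculation backwards: given (i)--(iii) and \eqref{eq:intlocrec}, the predictable finite-variation part of the driving semimartingale is identically zero, so $P(\cdot,T)/B$ is a local martingale for every $T$.

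The main obstacle I anticipate is the meticulous bookkeeping around $[R, \tilde X^{(T)}]$, since the jumps of $\tilde X^{(T)}$ and of $R$ may co-occur at the common predictable times driven by $\bar\mu^{(T)}$, $A$ and $X$; verifying that the compensator of the combined jump term is exactly $W^{(1)}\ast\mu^{p,Y^{(T)}} + W^{(3)}\ast\mu^{p,(Y^{(T)},R)}$ plus the pointwise residual, and that no cross-terms have been double-counted between the $[R,\tilde X^{(T)}]$ contribution and the jump compensator of $\tilde X^{(T)}$ itself, is the delicate part. Once this is established the remainder is a careful but largely mechanical adaptation of the proof of Theorem \ref{maindriftcondzerorecovery}.
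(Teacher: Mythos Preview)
Your proposal is correct and follows essentially the same route as the paper: represent $P(t,T)=\mathcal{E}(R)_t\exp(X_t^{(T)})=\mathcal{E}(\tilde X^{(T)}+R+[\tilde X^{(T)},R])_t$ via Yor's formula, decompose $[\tilde X^{(T)},R]_t=\sum_{s\le t}\Delta R_s\,\Delta\tilde X^{(T)}_s$ using $\Delta\tilde X^{(T)}_s=e^{\Delta X_s^{(T)}}-1$ and the split into $W^{(3)}(s,\Delta Y^{(T)}_s,\Delta R_s)$ plus the residual $\Delta R_s(e^{g(s-,s)\Delta\bar\mu_s^{(T)}}-1)$, compensate under \eqref{eq:intlocrec}, and separate the predictable finite-variation part into absolutely continuous, singular and jump components to read off (i)--(iii). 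The paper carries out exactly this computation, writing the full decomposition explicitly as its equation \eqref{eq:decSmgenrec} and then compensating $\Delta R$ via $\Delta C_t=-\int x\,\mu^{p,R}(\{t\},dx)$ to turn $\sum(e^{g(s-,s)\Delta\bar\mu_s^{(T)}}-1)(1+\Delta R_s)$ into $\sum(e^{g(s-,s)\Delta\bar\mu_s^{(T)}}-1)(1-\Delta C_s)$, just as you outline.
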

The interpretation of the conditions is similar as in Theorem \ref{maindriftcondzerorecovery}. Condition (i), (ii), and (iii) arise from the absolutely continuous part, the jump part, and the singular part of the semimartingale $\tilde{X}^{(T)}_t + R_t + [\tilde{X}^{(T)},R]_t -\int_0^t r_s ds$, describing the discounted credit risky bond process, respectively. Due to the general setting the conditions are complex, but reveal that a precise relationship between the underlying processes needs to be satisfied. 

It might be interesting to remark, that regarding practical application of the result, discontinuities are well-acknowledged in the literature, see \cite{GehmlichSchmidt2016MF} for references. Often, chosing $A$ deterministic will be sufficient to incorporate jumps visible already at the current time. This is a necessary condition for affine semimartingale models as shown in \cite{KellerResselSchmidtWardenga2019}. Also in \cite{fontana2020term}, the multiple-yield curve market is analyzed under this additional assumption. 

\begin{proof}
By means of \eqref{eq:prices_rec} and Lemma \ref{eq:reprcRBond} it is possible to represent credit risky bond prices $P(t,T)$ by
	\begin{equation*}
		P(t,T) = \mathcal{E}(R)_t\exp\bigl(X_t^{(T)}\bigr),
	\end{equation*}
	with the process $X_t^{(T)}$ defined in \eqref{processXT}. From the proof of Lemma \ref{LemmaBondPriceasStochExp} we know that $\exp(X_t^{(T)}) = \mathcal{E}(\tilde{X}^{(T)}_t)$, where $\tilde{X}^{(T)}_t$ is defined as in Lemma \ref{LemmaBondPriceasStochExp}. By means of Yor's formula, see \citet[II-8.19]{JacodShiryaev}, credit risky bond prices admit a representation as a stochastic exponential
	\begin{equation*}
		P(t,T) = \mathcal{E}\bigl(\tilde{X}^{(T)} + R + [\tilde{X}^{(T)},R]\bigr)_t.
	\end{equation*}
By means of the definition in  \eqref{eq:decA} -- \eqref{Weinsmu} and \eqref{functionW3} we obtain the following semimartingale representation of $\tilde{X}^{(T)}_t + R_t + [\tilde{X}^{(T)},R]_t$,
	\begin{align}\label{eq:decSmgenrec}
		\tilde{X}^{(T)}_t + R_t + [\tilde{X}^{(T)},R]_t &=   - \int_0^T f(0,u) du
		- \int_0^t \bar{B}(s,T) d(X^{\textup{c}}_s + X^{\textup{d}}_s) + R_t  \nonumber\\
		&\quad + \int_0^t f(s,s) ds 
		- \int_0^t \bar{A}(s,T) A^{\textup{ac}}_s ds \nonumber\\
		& \quad  -\int_0^t \bar{A}(s,T) dA^{\textup{sing}}_s- \int_0^t \bar{B}(s,T) \tilde A^{\textup{ac}}_sds \nonumber\\
		&\quad- \int_0^t \bar{B}(s,T) d\tilde A^{\textup{sing}}_s- \left(\int_0^\cdot \int_{(s,T]} g(s,u)  \mu(ds,du)\right)_t^\textup{cont}\nonumber\\
		& \quad + \int_0^t g(s-,s) m_s ds + \int_0^t g(s-,s) d\nu_s  \nonumber\\
		& \quad+ \frac{1}{2}\int_0^t\bigl(\bar{B}(s,T)\bigr)^2 \psi_s ds 
		+ \frac{1}{2}\int_0^t\bigl(\bar{B}(s,T)\bigr)^2 d\zeta_s \nonumber\\
		& \quad + \sum_{0<s\leq t}\bigl((e^{g(s-,s) \Delta\bar{\mu}_s^{(T)}}-1) (1 + \Delta R_s)
		+ \bar{B}(s,T) \Delta X^{\textup{d}}_s\bigr) \nonumber\\
		& \quad + \bigl(W^{(1)}\ast \mu^{Y^{(T)}}\bigr)_t + \bigl(W^{(3)}\ast \mu^{(Y^{(T)},R)}\bigr)_t.
	\end{align}

	By definition, $\Q$ is an ELMM with respect to the numeraire $B$, if and only if, $P(\cdot,T)/B$ is a $\Q$-local martingale for every $T\in\R_+$.
	
	First, assume that $P(\cdot,T)/B$ is a $\Q$-local martingale for every $T \in \R_+$. Then, %
	the finite variation part of \eqref{eq:decSmgenrec} is of locally integrable variation. Since we aim at a decomposition of \eqref{eq:decSmgenrec} as a local martingale plus some predictable finite variation processes, we consider the following. 
	
	Moreover, %
	the jumps $\Delta R_t = \int_{[-1,0]} x \mu^R(\{t\},dx)$ can be represented as
	\begin{equation*}
		\Delta R_t = (\textup{local martingale})_t + \int_{[-1,0]} x \mu^{p,R}(\{t\},dx)= (\textup{local martingale})_t -\Delta C_t,
	\end{equation*}
	with $\Delta C_t = - \int_{[-1,0]}x \mu^{p,R}(\{t\},dx)$, for all $t \in [0,T]$, $T\in \mathbb{\R}_+$.
	Therefore, 
	\begin{align*}
		\sum_{0<s \leq t}  \bigl(e^{g(s-,s) \Delta\bar{\mu}_s^{(T)}}  -1\bigr) (1+ \Delta R_s)
		& =  (\textup{local martingale})_t \\
		& \quad+ \sum_{0<s \leq t} \bigl(e^{g(s-,s) \Delta\bar{\mu}_s^{(T)}}  -1\bigr)(1-\Delta C_s).
	\end{align*}

Since $W^{(1)} \in G_{\textup{loc}}(\mu^{Y^{(T)}})$ and $R$ have bounded jumps, we are able to compensate the processes $W^{(1)} \ast \mu^{Y^{(T)}}$ and   $W^{(3)} \ast \mu^{(Y^{(T)},R)}$ with the corresponding compensators $\mu^{p,Y^{(T)}}$ and $\mu^{p,(Y^{(T)},R)}$, respectively.
We finally obtain the following representation by means of \eqref{eq:decSmgenrec}
	\begin{align} \label{eq:decompSemimartingalegeneralrecovery}
		\tilde{X}^{(T)}_t + R_t + [\tilde{X}^{(T)},R]_t &-\int_0^t r_s ds  \nonumber\\
		&= M(t,T)   -\int_0^t r_s ds +   \int_0^t f(s,s) ds
		- \int_0^t \bar{A}(s,T)A^{\textup{ac}}_s ds  \nonumber \\
		&  \quad - \int_0^t \bar{A}(s,T) dA^{\textup{sing}}_s
		- \int_0^t \bar{B}(s,T) \tilde A^{\textup{ac}}_sds \nonumber\\
		&  \quad - \int_0^t \bar{B}(s,T) d\tilde A^{\textup{sing}}_s - \left(\int_0^t \int_{(s,T]} g(s,u)  \mu(ds,du)\right)^\textup{cont}\nonumber\\
		& \quad + \int_0^t g(s-,s) m_s ds + \int_0^t g(s-,s) d\nu_s\nonumber\\
		&\quad  +\frac{1}{2}\int_0^t\bigl(\bar{B}(s,T) \bigr)^2 \psi_s ds+ \frac{1}{2}\int_0^t\bigl(\bar{B}(s,T) \bigr)^2 d\zeta_s\nonumber\\
		& \quad +  \bigl(W^{(1)} \ast \mu^{p,Y^{(T)}}\bigr)_t  + \bigl(W^{(3)} \ast \mu^{p,(Y^{(T)},R)}\bigr)_t\nonumber\\
		& \quad+ \sum_{0<s \leq t} \bigl(e^{g(s-,s) \Delta\bar{\mu}_s^{(T)}}  -1\bigr)(1-\Delta C_s) \nonumber\\
		&\quad+ \Big( \sum_{0<s\leq \cdot}\bar{B}(s,T) \Delta X^{\textup{d}}_s\Big)^p_t -C_t,
	\end{align}
	where $M(\cdot,T)$ denotes the local martingale
	\begin{align*}
		M(t,T) &=  - \int_0^T f(0,u) du
		- \int_0^t \bar{B}(s,T) d(X^{\textup{c}}_s + X^{\textup{d}}_s)\\
		&\quad 	+ 	 \bigl( e^{g(\cdot \,-,\cdot) \Delta\bar{\mu}^{(T)}} \bigr)x \ast (\mu^R  -\mu^{p,R})_t\\
		& \quad  + \sum_{0<s\leq t}\bar{B}(s,T) \Delta X^{\textup{d}}_s -  \Big( \sum_{0<s\leq t}\bar{B}(s,T) \Delta X^{\textup{d}}_s\Big)^p\\
		&\quad +\bigl(W^{(1)}\ast \mu^{Y^{(T)}}\bigr)_t  - \bigl(W^{(1)}\ast \mu^{p,Y^{(T)}}\bigr)_t \\
		&\quad + \bigl(W^{(3)}\ast \mu^{(Y^{(T)},R)}\bigr)_t - \bigl(W^{(3)}\ast \mu^{p,(Y^{(T)},R)}\bigl)_t.
	\end{align*}
	The finite variation part of \eqref{eq:decompSemimartingalegeneralrecovery} must vanish. 
	We consider separately the absolutely continuous, singular, and jump parts of the finite variation parts.
	In a first step we analyze the jump parts and it must hold that
	\begin{align}\label{eq:jumppartgeneralrecovery}
	0 &= \Delta \bigl(W^{(1)} \ast \mu^{p,Y^{(T)}}\bigr)_t  + \Delta \bigl(W^{(3)} \ast \mu^{p,(Y^{(T)},R)}\bigr)_t +\Delta \Big( \sum_{0<s\leq \cdot}\bar{B}(s,T) \Delta X^{\textup{d}}_s\Big)^p_t\nonumber\\
	&\quad+  \bigl(e^{g(t-,t) \Delta\bar{\mu}_t^{(T)}}  -1\bigr)(1-\Delta C_t) -\Delta C_t ,
	\end{align}
	for all $0\leq t \leq T$, $T \in \mathbb{R}_+$, which yields condition (ii). 

	Considering the continuous singular part of the finite variation terms appearing in \eqref{eq:decompSemimartingalegeneralrecovery} it follows that
	\begin{align*} 
		0 = &   - \int_0^t \bar{A}(s,T) dA^{\textup{sing}}_s - \int_0^t \bar{B}(s,T) d\tilde A^{\textup{sing}}_s- \left(\int_0^t \int_{(s,T]} g(s,u)  \mu(ds,du)\right)^{\textup{sing}}  \nonumber\\
		& + \int_0^t g(s-,s) d\nu_s + \frac{1}{2}\int_0^t\bigl(\bar{B}(s,T) \bigr)^2 d\zeta_s
		+ \Big( \sum_{0<s\leq \cdot}\bar{B}(s,T) \Delta X^{\textup{d}}_s\Big)^{p,\textup{sing}}_t \\
		&+  \bigl(W^{(1)} \ast \mu^{p,Y^{(T)}}\bigr)_t^{\textup{sing}}  + \bigl(W^{(3)} \ast \mu^{p,(Y^{(T)},R)}\bigr)_t^{\textup{sing}} -C^{\textup{sing}}_t\nonumber,
	\end{align*}
	for all $0 \leq t \leq T$, $T \in \mathbb{R}_+$, which yields condition (iii).\\
	It remains to consider the absolutely continuous part of the finite variation terms from \eqref{eq:decompSemimartingalegeneralrecovery}. For all $0 \leq t \leq T$, $T\in \mathbb{R}_+$ it must hold that
	\begin{align*} %
		r_t &=  f(t,t) - \bar{A}(t,T) A^{\textup{ac}}_t  	- \bar{B}(t,T) \tilde A^{\textup{ac}}_t 
		- \left(\int_0^\cdot \int_{(s,T]} g(s,u)  \mu(ds,du)\right)^{\textup{ac}}_t  \nonumber\\
		&  \quad  +  g(t-,t) m_t + \frac{1}{2}\bigl(\bar{B}(t,T) \bigr)^2 \psi_t + \Big( \sum_{0<s\leq \cdot}\bar{B}(s,T) \Delta X^{\textup{d}}_s\Big)^{p,\textup{ac}}_t \\
		&\quad +  \bigl(W^{(1)} \ast \mu^{p,Y^{(T)}}\bigr)^{\textup{ac}}_t  + \bigl(W^{(3)} \ast \mu^{p,(Y^{(T)},R)}\bigr)^{\textup{ac}}_t -C^{\textup{ac}}_t,
	\end{align*}
	which corresponds to condition (i).
	
	Conversely, if the integrability condition \eqref{eq:intlocrec} is fulfilled, we obtain that the finite variation processes in \eqref{eq:decSmgenrec} are of locally integrable variation, such that we obtain representation \eqref{eq:decompSemimartingalegeneralrecovery}. The local martingale property of $P(\cdot,T)/B$, for every $T \in \R_+$, follows with condition (i) -- (iii).
\end{proof}

\section{Conclusion}\label{sec:conclusion}
In this work we studied a credit risky market driven by finite-dimensional semimartingales under minimal assumptions. It turned out that using semimartingales as drivers in comparison to L\'evy processes, or semimartingales with absolutely continuous characteristics requires the extension of the HJM setting for taking stochastic discontinuities into account. Here we studied the most general extension where a random measure drives a second integral in the HJM representation. All forward random fields were driven by general semimartingales and we obtained necessary and sufficient drift conditions characterizing local martingale measures. This is the key step in guaranteeing that the market is free of arbitrage in the sense of NAFLVR. 

Incorporating stochastic discontinuities into models for financial markets, and in particular term structure models, has only been taken up recently in the literature, although their presence was acknowledged by practitioners for quite a while (see, for example, \cite{Piazzesi2001}). The present work gives a framework which is able to capture this and builds the foundation for building decisive models which are able to incorporate stochastic discontinuities and predictable components of jumps. 

Possible extensions of the present work are: the extension to infinite-dimensional drivers, compare with the very general setting in \cite{grafendorfer2016infinite}; the extension to  multiname-credit like in \cite{bielecki2014bottom} and \cite{giesecke2013default}; the study of  CDO term structures like in \cite{FilipovicSchmidtOverbeck2011}. 

% \bibliographystyle{agsm}
% \bibliography{bibfile}

\end{document}